\renewenvironment{proof}[1][Proof]{\textbf{#1.} }{\ \rule{0.5em}{0.5em}}
\begin{document}


\RUNAUTHOR{Buterin, Hitzig, and Weyl}

\RUNTITLE{A Flexible Design for Funding Public Goods}

\TITLE{A Flexible Design for Funding Public Goods}

\ARTICLEAUTHORS{%
\AUTHOR{Vitalik Buterin}
\AFF{Ethereum Foundation} 
\AUTHOR{Zo\"{e} Hitzig}
\AFF{Harvard University, \EMAIL{zhitzig@g.harvard.edu}} 
\AUTHOR{E. Glen Weyl}
\AFF{Microsoft Research, \EMAIL{glenweyl@microsoft.com}}
} 

\ABSTRACT{%
We propose a design for philanthropic or publicly-funded seeding to allow (near) optimal provision of a decentralized, self-organizing ecosystem of public goods.  The concept extends ideas from Quadratic Voting to a funding mechanism for endogenous community formation.  Citizens make public goods contributions to projects of value to them.  The amount received by the project is (proportional to) the square of the sum of the square roots of contributions received.  Under the ``standard model'' this mechanism yields first best public goods provision. Variations can limit the cost, help protect against collusion and aid coordination. We discuss applications to campaign finance, and highlight directions for future analysis and experimentation. 
}%

\KEYWORDS{public goods, free rider problem, mechanism design} 

\maketitle

%


\section{Introduction}

In many contexts, a sponsor with capital wishes to stimulate and support the creation of public goods but is ill-informed about the appropriate goods to create. Thus, such a sponsor may want to delegate this allocation to a decentralized market process.  Examples of these contexts include campaign finance, funding open source software, public or charitable support for news media and the funding of intraurban public projects. Recent work on the theory of Quadratic Voting (henceforth QV; see \citealp{intro} for a survey) suggests that near-optimal collective decision-making may be feasible in practice, but relies on an assumption of a fixed set of communities and public\ goods that is inappropriate to this context.  In this paper we propose an extension of the logic of QV to this setting.

The basic problem we address can be seen by comparing two extreme ways of funding such a ecosystem, both of which are problematic.  On the one hand, a simple private contributory system famously leads to the under-provision of public goods that benefit many people because of the free-rider problem \citep{samuelsonpure}. The larger the number of people the benefit is split amongst, the greater the proportional under-provision. Conversely, a system based purely on membership or on some other one-person-one-vote (1p1v) system cannot reflect how important various goods are to individuals and will tend to suppress smaller organizations of great value.  We aim to create a system that is as flexible and responsive as the market, but avoids free-rider problems.\par

Our solution is to modify the funding principle underlying the market to make it nonlinear.  In a standard linear private market, the funding received by a provider is the sum of the contributions made by the funders.  In our ``Quadratic Finance'' (QF) mechanism, the funding received by a provider is the square of the sum of the square roots of the contributions made by the funders.  Holding fixed contribution amounts, funding thus grows with the square of the number of members.  However, small contributions are heavily subsidized (as these are the most likely to be distorted by free-riding incentives) while large ones are least subsidized, as these are more like private goods. Under the standard selfish, independent, private values, quasi-linear utility framework, our mechanism leads to the utilitarian optimal provision of a self-organizing ecosystem of public goods. 

Existing systems such as matching funds for infrastructure projects, political campaigns, charitable contributions, and other public goods aim to capture similar benefits, but do so in an unsystematic way. For example, a variety of public goods are funded through matching programs, whereby an institutional body (a government, corporation, political party, etc.) matches individual contributions either 1:1 or in some other ratio. For example, New York City matches small contributions to campaigns for elected office (city council, mayor, comptroller, public advocate), matching contributions 6:1 and up to \$175. Many corporations use similar rules: one of our employers matches charitable contributions by all full-time employees up to \$15,000 a year. Doing so amplifies small contributions, incents more contributions and greater diversity in potential contributors, and confers a greater degree of influence on stakeholders in determining ultimate funding allocations. 

Matching programs are not only common in public and charitable funding, but also follow an intuitive logic that has built a variety of public policies. Indeed, the very idea of tax deductibility for charitable contributions is a form of governmental matching. But while matching funds share the spirit of our funding principle, they lack a systematic design. Funding ratios and match thresholds are often set in largely arbitrary ways. The QF mechanism can be seen as offering a coherent design that captures the central motivation of matching funds in a mechanism that is (approximately) optimal from the perspective of economic theory. 

We begin the paper in Section \ref{background} by providing background on the economic theory of public goods. We then develop a simple but general mathematical model in Section \ref{model} of public good provision and use it to illustrate the failures of both private contributory systems \citep{varian} and 1p1v \citep{bowen}. Then in Section \ref{analysis}, we describe QF formally in a simple model. We  show mathematically that QF leads to optimal public goods provision. We then turn, in Section \ref{extensions}, to variations and extensions that enrich our understanding of QF and its range of applications. 

Having developed this apparatus, we change gears in Section \ref{apps} to describing one application of QF in detail, to campaign finance reform. We also briefly describe how QF may be applied to open source software ecosystems, news media finance, charitable giving and urban public projects, and how QF aligns with qualitative features of previous solutions while more smoothly covering a wider range of cases and problems. We conclude the paper in Section \ref{conclusion} with a discussion of directions for future research and experimentation.

\section{Background}\label{background}

\vspace{\baselineskip}

\subsection{Public Goods Problems}
One of the most fundamental problems in political economy is variously known as the ``free-rider'', ``collective action'' or ``public goods'' problem; we will use the term ``public goods''.  All these terms refer to situations in which individuals can or do receive benefits from shared resources and investments that may be more valuable than the contribution they individually make to those resources. These goods cannot be efficiently priced due to the expense or inefficiency required to exclude individuals from access. By ``public good" we refer to any activity with increasing returns in the sense that the socially efficient price to charge for the activity (marginal cost) is significantly below the average cost of creating the good.

Seen in this broader light, public goods are at the core of human flourishing. Civil society sustains itself precisely because the whole is greater than the sum of its parts.  Contemporary economic thought has increasingly emphasized the centrality of increasing returns, especially through investment in innovation and knowledge, to development, beginning with the work of \citet{romer1986}.  As the exploding literature on agglomeration and spatial economics emphasizes, the cities that created the idea of middle classes (viz. {\em bourgeoisie}) and the citizen could not exist without increasing returns \citep{krugman}.  Yet, despite this centrality,  classical capitalism deals poorly with such activities. Because each individual, if she acts selfishly, only accounts for the benefits she receives and not the benefits to all other individuals, funding levels will not scale with the number of individual beneficiaries as would be desirable.

A range of institutions have emerged to address the public goods problem in modern society. The most canonical and perhaps the most important institution that coordinates provision of public goods is the contemporary democratic nation state.  Such states use taxation and voting-based governance systems to determine which public goods should be provided. The other most prevalent method for addressing public goods involves converting them to private goods by imposing technologies (e.g. walls, fee collectors at parks, digital rights management for information, etc.) that allow individuals to be excluded.  Other institutions draw on moral, cultural, religious or social motives to induce individuals to contribute to charitable providers of public goods. Some intermediate institutions mix elements of these three ideal types.\footnote{One example of an institution that mixes these ideal types is a local government with some ability to exclude, which citizens can move across at some cost and to which they have some loyalty (and thus often donate their time). Another example is an exclusive but not-for profit club.}

Unfortunately, all of these institutions have limitations. 1p1v democratic systems (even when they work appropriately) respond to the will of the majority, not necessarily to what would create the greatest overall value.  They often oppress minorities or are subverted by minorities to avoid such oppression.  They are also costly to set up, rigid and do not easily adapt to demands for different and new levels of organization. Private (usually corporate) exclusion-based efforts are, while more flexible, usually cumbersome and costly to impose, often lack effective feedback mechanisms that ensure they serve the interests of their members and, perhaps most importantly, inefficiently exclude potential users.  Charitable organizations are often more responsive and flexible than either of the other forms, but they rely on motives that can be difficult to closely align reliably with the common good outside of the relatively small groups in which they are often very effective \citep{ostrom}.  Outside such groups the often instead get captured by status motivations and parochial, even exclusionary, interests.\footnote{See \citet{reich} for a discussion of these issues in the context of contemporary American capitalism.}

\subsection{Literature}\label{literature}

\citet{clarke} and \citet{groves}, recasting the insights of \citet{vickery}, proposed a solution to the collective action problem, in the form of a  mechanism for individuals to reveal their preferences over public goods to a government or other central clearinghouse to overcome the rigidity and inefficiency of majority rule. This system, known as the VCG mechanism, was shown in \citet{greenlaffont1977, greenlaffont1979} to be the only dominant strategy incentive-compatible mechanism for producing public goods. The system is fragile to collusion and risky for participants. Though \citet{smith1980} showed that variants of VCG succeed in laboratory experiments, others have concluded that VCG mechanisms are generally impractical \citep{impracticalVCG}.

Other mechanisms for near-optimal collective decision-making have been proposed. \citet{gl} and \citet{hz} both suggested a quadratic mechanism for determining the level of continuous public goods. But, their methods require either a centralized iterative process or depend heavily on a strong assumption of complete information, and they do not, in general, satisfy individual rationality. The basic insight of quadratic pricing of collective choices reemerged recently in \citet{qvb}'s proposal for what he called ``Quadratic Vote Buying''. In particular, he proposed allowing individuals to buy votes, paying the square of the votes they buy.  He argued, and \citet{qveq} proved, that under standard assumptions (similar to those we use below) in large populations QV leads to approximately optimal decisions on public goods. 

However, while QV addresses the inefficiency of standard 1p1v voting systems for a \textit{given set of decisions and collectives,} it doesn't solve the problem of flexibility.  That is, it does not allow the set of public goods to emerge from a society organically, and effectively assumes a previously-specified organizational structure that has to be taken as an assumption or imposed by an authority.  In this paper we extend the ideas around QV to address these limitations. In sum, our mechanism improves on existing mechanisms by allowing for greater flexibility in the set of goods to be funded, by eliminating the assumption of complete information and by satisfying individual rationality constraints.

\section{Model}\label{model}

\vspace{\baselineskip}
We develop a flexible model for a society choosing which public goods to fund. Consider a society of  \( N \) citizens  \( i=1, \ldots N \). We assume throughout what follows that we can verifiably distinguish among and identify these citizens, though we will discuss the possibility that they may collude (see \ref{collusion} below). We use the term ``society'' to refer to the set of all participants and the word ``community'' to refer to groups that fund a particular public good; however, in many applications, the relevant ``society'' is itself a community within a broader setting.

There is a set of potential public goods  \( P \). We do not make any assumption about the nature of this set. There may be measure theoretic issues for some cardinalities of the set, but we will ignore these issues in our first presentation of the basic idea. In particular, there is no sense in which the set of public goods need be specified externally or in advance; any citizen may at any time propose a new public good. We denote a typical public good  \( p \in P \).

\subsection{Individual preferences and actions}

Let  \( V_{i}^{p} \left( F^{p} \right)  \) be the currency-equivalent utility citizen  \( i \) receives if the funding level of public good  \( p \) is  \( F^{p} \).  We assume all public goods generate independent value to citizens (no interactions across public goods) and that citizens have quasi-linear utility denominated in units of currency. We also assume a setting of complete information, though given the flexible set up of the problem, our results do not rely heavily on this assumption. We also abstract away from issues about observability and timing of contributions.

Our interest here is in maximization of dollar-equivalent value rather than achieving an equitable distribution of value (we assume that an equitable distribution of basic resources has been achieved in some other manner, such as an equal initial distribution of resources). For purposes of simplifying the analysis below, we assume all functions  \( V_{i}^{p}\) are concave, smooth and increasing.  Absent these assumptions some complications may arise (as we return to in \ref{concavity}) but it is easier to abstract from them in presenting the core ideas.

Each citizen  \( i \) can make contributions to the funding of each public good  \( p \) out of her personal resources  \( c_{i}^{p} \). The total utility of citizen  \( i \) is then
\begin{equation}
\sum _{p}^{}V_{i}^{p} \left( F^{p} \right) -c_{i}^{p}-t_{i}
\end{equation}    
where  \( t_{i} \) is a tax imposed on individual  \( i \). In this framework, different funding mechanisms for public goods are different formulae for relating  \(  \left \{ F^{p} \right\}_{p \in P} \) to  \(  \left \{ c_{i}^{p} \right\}_{i \in I, p \in P} \), with any surplus or deficit being made up for by taxes that do not influence behavior.

\subsection{Funding mechanisms}

A funding mechanism in our flexible public goods setting defines the total amount of funding received for each good in the set $P$, given all individual contributions $c_i^p$. Formally, a mechanism is a mapping from the set of all individual contributions to funding levels for all goods. The set of individual contributions is comprised of vectors $\mathbf{c}^p = (c_1^p, c_2^p, ..., c_N^p)$ where subscripts index citizens. Thus, $\mathbf{c}^p$ is a vector in $\mathbb{R}^N$, and we denote by $\mathcal{C}^{|P|}$ the space of all possible collections of funding levels for each good $p$ given contributions from the $N$ citizens, i.e. $\{\mathbf{c}^p\}_{p\in P}$. The set of all final funding levels for all goods $p\in P$ is the set $\mathcal{F}$, which has $|P|$ real-valued elements $\mathbf{F} = (F^{1}, F^2, ..., F^{|P|})$,  with $F^p \in \mathbb{R}. $

\begin{definition}[Funding Mechanisms] A \textit{funding mechanism} \( \Phi: \mathcal{C}^{|P|} \to \mathcal{F}\)  determines the total level of funding for each good $p\in P$, such that $\Phi(c_i^p) = \{F^p\}_{p \in P}$.\footnote{In a slight abuse of notation, we will sometimes use $\Phi$ to refer to the subcomponent $\Phi^p$ which maps individual contributions to good $p$ into funding levels $F^p$ for that particular good $p$.}
\end{definition}
Budget balance requires that  \(\sum _{i}^{}t_{i}= \sum _{p}^{} \left( F^{p}- \sum _{i}^{}c_{i}^{p} \right)  \), i.e. taxes make up for any deficit between individual contributions and total funding levels. Before studying such mechanisms, however, we consider what social welfare maximization requires.  Our analysis here is the special case of \citeauthor{samuelsonpure}'s analysis in the case of quasi-linear utility.

\subsection{Welfare and optimality}
Given the simple set up of our model, welfare calculations are straightforward. Total social welfare is 
\begin{equation}
\sum _{p}^{} \left(  \sum _{i}^{}V_{i}^{p} \left( F^{p} \right)  \right) -F^{p}  
\end{equation}  
by the budget constraint. Let  \( V^{p} \left( F^{p} \right)  \equiv  \sum _{i}^{}V_{i}^{p} \left( F^{p} \right)  \) be the total value all citizens derive from the good.

Maximizing $V^p(F^p)$ over all weakly positive funding levels  \(  \left \{ F^{p} \right\}_{p \in P} \) for all goods, given concavity and smoothness of the  \( V \) functions gives a simple solution:   \( F^{p} \) is 0 if  \( V^{p\prime} \left( 0 \right)  \leq 1 \); \( F^{p} \)  takes on the unique value satisfying \( V^{p\prime}=1\) otherwise. That is, the total marginal value derived from the good should equal  \( 1 \).  
 
\begin{definition}[Optimality] A funding mechanism $\Phi$ is optimal if for all $p \in P$: (i) $V^{p\prime}(0) \leq 1$ implies $F^p = 0$, and (ii) $V^{p\prime}(0) > 1$ implies $V^{p\prime}=1$. 
 \end{definition}

\subsection{Suboptimal mechanisms}

We now consider two suboptimal funding mechanisms. The first, which we refer to as ``private contributions'', has the total contributions exactly equal to the sum of individual contributions, as analyzed in \citet{varian}. There is no centralized funding based on individual contributions, and thus no need for taxes or transfers. 
\begin{definition}[Private Contributions Mechanism] Under \textit{private contributions}, \[ \{F^{p}\}_{p\in P}= \Phi^{priv}(c_i^p) = \left\{\sum _{i}^{}c_{i}^{p}\right\}_{p \in P}. \] 
\end{definition}
Note that  \( t_{i}=0 \) under private contributions. This mechanism corresponds to the traditional formula used for charitable giving; while there are sometimes public matching funds that linearly scale contributions, these will not greatly change our conclusions, which closely follow the analysis of \citeauthor{varian}\footnote{An extension that accounts for linear matching funds would simply add a scaling factor to private contributions. This mechanism for linearly scaled financing, which we might call $\Phi^{LF}$ has total funding defined by $\Phi^{LF}(c_i^p) = \{\sum_i \alpha c_i^p\}_{p \in P}$, with $\alpha > 1$. This scaling factor clearly does not change our analysis, it simply scales the degree of underfunding.} In this case, every citizen  \( i \) seeks to maximize, in determining her contribution to good  \( p \)
\begin{equation}
 V_{i}^{p} \left(  \sum _{j}^{}c_{j}^{p} \right) -c_{i}^{p}. 
\end{equation}

\begin{proposition}[Suboptimality of Private Contributions] The private contributions mechanism $\Phi^{priv}$ is suboptimal.
\end{proposition}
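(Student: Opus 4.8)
The plan is to show that private contributions generically underfund any public good valued by two or more citizens, which directly contradicts the notion of an optimal mechanism defined above. I would take the Nash equilibrium in contributions as the behavioral prediction implicit in the individual maximization problem stated just before the proposition. First, fix a good \(p\) and hold the contributions \(c_j^p\) of all \(j \neq i\) fixed. Since \(V_i^p\) is concave, smooth and increasing, citizen \(i\)'s objective \(V_i^p\!\left(c_i^p + \sum_{j\neq i} c_j^p\right) - c_i^p\) is concave in \(c_i^p \ge 0\), so her best response is pinned down by a Kuhn--Tucker condition: at the resulting funding level \(F^p\), either \(c_i^p > 0\) and \(V_i^{p\prime}(F^p) = 1\), or \(c_i^p = 0\) and \(V_i^{p\prime}(F^p) \le 1\). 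Aggregating over \(i\) at an equilibrium with funding \(F^p_{\mathrm{NE}} = \sum_i c_i^p\) yields \(V^{p\prime}(F^p_{\mathrm{NE}}) = \sum_i V_i^{p\prime}(F^p_{\mathrm{NE}})\), where every summand is at most \(1\) and equals \(1\) for each citizen contributing a positive amount.

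Next I would compare this with the optimum, which by the argument preceding the optimality definition satisfies \(V^{p\prime}(F^p_{\mathrm{opt}}) = 1\) whenever positive funding is optimal. Consider a good such that at least two citizens have strictly positive marginal value on the relevant range. If \(F^p_{\mathrm{NE}} > 0\), then some citizen contributes, supplying a term equal to \(1\) in the sum, while another citizen supplies a strictly positive term, so \(V^{p\prime}(F^p_{\mathrm{NE}}) > 1\); by strict concavity of \(V^p\) this forces \(F^p_{\mathrm{NE}} < F^p_{\mathrm{opt}}\). If instead \(F^p_{\mathrm{NE}} = 0\), one can still have \(\sum_i V_i^{p\prime}(0) > 1\) even though every individual marginal value is at most \(1\), so once again \(F^p_{\mathrm{NE}} = 0 < F^p_{\mathrm{opt}}\). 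Either way the optimality condition fails for that good.

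Since suboptimality only requires exhibiting one profile on which the mechanism misfires, I would close with a concrete witness: \(N = 2\) citizens with identical valuation \(V_i^p = v\) for a single good, where \(v\) is smooth, increasing, and strictly concave with \(v'(0) > 1\). The optimum then solves \(2 v'(F^p_{\mathrm{opt}}) = 1\), while the unique interior equilibrium solves \(v'(F^p_{\mathrm{NE}}) = 1\); hence \(F^p_{\mathrm{NE}} < F^p_{\mathrm{opt}}\) and \(\Phi^{priv}\) is not optimal. The main obstacles I anticipate are (i) fixing the behavioral solution concept and ensuring that ``the'' equilibrium funding level is well defined, and (ii) upgrading the marginal inequality \(V^{p\prime}(F^p_{\mathrm{NE}}) > 1 = V^{p\prime}(F^p_{\mathrm{opt}})\) into a strict ordering of funding levels, which needs \(V^{p\prime}\) strictly decreasing rather than merely nonincreasing; both are sidestepped by restricting attention to the strictly concave two-citizen witness in the final step.
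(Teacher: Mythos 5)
Your proposal is correct and follows essentially the same route as the paper: derive the first-order condition $V_i^{p\prime}(F^p)=1$ for any positive contributor, sum across citizens, and observe that whenever more than one citizen values the good the aggregate marginal value exceeds $1$, violating the optimality condition $V^{p\prime}=1$. Your version is in fact more careful than the paper's terse argument---you make the Nash solution concept and the Kuhn--Tucker corner conditions explicit, handle the $F^p_{\mathrm{NE}}=0$ case, and supply a concrete two-citizen witness---but these are refinements of the same idea rather than a different proof.
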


\begin{proof}
Maximization requires (differentiating) that for any citizen  \( i \) making a positive contribution to good  \( p \) that 
 \[ V_{i}^{p\prime} \left( F^{p} \right) =1. \] 
That is, the level funding must be such that \textit{a single citizen's} marginal value equals  \( 1 \). Summing across citizens, $V^{p\prime} =1$ only when $c_i^p>0$ for a single $i$, and $c_j^p=0$ for all $j\neq i$. When there is more than one contribution to good $p$, generically $V^{p\prime} > 1$. 
\end{proof}

 If a large set of citizens benefit significantly from a public good, this will typically lead to severe underfunding.  For example, if all citizens are homogeneous, this is equivalent to  \( V^{p\prime}=N \), or setting the total marginal utility of the good to  \( N \) times the level it should be at. When citizens have heterogeneous preferences, matters are even worse, at least from a distributive perspective: only the single citizen who cares most on the margin about the good has any influence on its provision. Matters are more pessimistic yet if citizens can make negative contributions (privatize public goods), as then the lowest valuation citizen determines the provision level.\footnote{\citet{aa} suggest a system that sounds superficially different from purely private contributions but will typically lead to similar results.  They suggest every citizen be compelled to give\ some\ fixed amount to public goods (in fact, they suggest funding this using progressive taxes, but from the efficiency perspective we take here these are basically equivalent).  If there is a constrained set of public goods, this may have some impact in raising overall funding levels, but will not move things much towards optimality.  But if there is a sufficiently rich set of goods, such that each individual has a good that is equivalent to giving the money back to herself, this yields just the same result as capitalism: every individual uses the money to pay herself back, unless she has the greatest value for the public good. }

Another mechanism, which we will call ``1p1v'' works as follows.  Majority voting determines whether to fund each public good, and the goods selected receive funding through taxes and transfers.

 \begin{definition}[1p1v Mechanism] The \textit{1p1v} Mechanism $\Phi^{1p1v}$ satisfies \[\{F^p\}_{p\in P} = \Phi^{1p1v}(c_i^p) = \{N\cdot [Median_{i} V_{i}^{p\prime} \left( F^{P} \right) =1]\}_{p \in P}.\]
 \end{definition}
Clearly 1p1v does not lead to optimality, as the mean must be used in the above formula rather than the median to recover  \( V^{p\prime}=1 \), \ as \citet{bowen} observed.  \citet{bergstrombowen1} discussed the situations under which the mean is likely to be a good approximation for the median, and demonstrated the generic inefficiency of 1p1v type systems. 

\begin{proposition}[Suboptimality of 1p1v ] The 1p1v Mechanism $\Phi^{1p1v}$ does not guarantee optimal funding levels. 
\end{proposition}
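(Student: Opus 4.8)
The plan is to exhibit a single instance of preferences $\{V_i^p\}$ for which the 1p1v mechanism's funding level differs from the optimal level prescribed by the Optimality definition; since optimality is a universal claim over all preference profiles, one counterexample suffices. The cleanest route is to work with a single public good $p$ (the other goods playing no role, since values are independent across goods) and a small finite population, say $N=3$, and to pick concave, smooth, increasing $V_i^p$ whose marginal values are easy to invert.

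\medskip

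Concretely, I would first recall from the 1p1v definition that the mechanism funds $p$ at level $N$ exactly when the median citizen's marginal value at that funding level equals $1$ (and at $0$ otherwise), whereas the Optimality definition requires $F^p$ to solve $V^{p\prime}(F^p) = \sum_i V_i^{p\prime}(F^p) = 1$ when $V^{p\prime}(0) > 1$. Next I would choose parameters so that these two conditions are visibly incompatible: for instance, take two citizens with low but positive marginal value for the good and one ``median'' citizen whose marginal value pins down $F^p$ through the median condition, arranged so that $\sum_i V_i^{p\prime}$ evaluated at the 1p1v funding level is strictly greater than $1$. A convenient functional form is $V_i^p(F) = a_i \sqrt{F}$ or $V_i^p(F) = a_i \log(1+F)$ for suitable constants $a_i > 0$, both of which are smooth, increasing, and strictly concave, and whose derivatives are easy to manipulate. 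Then I would simply compute the 1p1v funding level from the median condition and verify that the Samuelson condition $V^{p\prime} = 1$ fails there, so the funding level is not optimal.

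\medskip

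Alternatively — and this is closer in spirit to the \citet{bowen} observation quoted in the text — I would argue at the level of the two first-order conditions directly: the 1p1v outcome is governed by $\mathrm{Median}_i V_i^{p\prime}(F^p) = 1$ while optimality requires $\mathrm{Mean}_i V_i^{p\prime}(F^p) = \tfrac1N$, or equivalently $\sum_i V_i^{p\prime}(F^p) = 1$; since the median of a collection of numbers generically differs from their sum (indeed from $N$ times their mean), the two pin down different funding levels except in the knife-edge case where the distribution of marginal values happens to be symmetric and correctly scaled. I would phrase this as: for any profile in which marginal values at the candidate funding level are not arranged so that the median equals the normalized mean, the 1p1v level is suboptimal, and such profiles are generic.

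\medskip

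I do not expect a genuine obstacle here — the result is essentially definitional once one writes down the median-versus-mean discrepancy — so the only real work is bookkeeping: making sure the chosen $V_i^p$ genuinely satisfy the maintained assumptions (concave, smooth, increasing), making sure the interior optimum exists (i.e. $V^{p\prime}(0) > 1$ so we are in case (ii) of the Optimality definition and not the corner case), and being careful that the median in the $N=3$ case is unambiguous. The mildest subtlety is simply that ``does not guarantee optimal funding levels'' is the claim, so a single clean counterexample is logically sufficient and I would lead with that rather than with the genericity argument.
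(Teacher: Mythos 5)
Your proposal is correct and its second route --- comparing the 1p1v condition $\mathrm{Median}_i\, V_i^{p\prime}(F^p)=1$ against the Samuelson condition $\sum_i V_i^{p\prime}(F^p)=1$ and noting that the median generically differs from the (normalized) mean --- is exactly the argument the paper gives, down to the appeal to \citet{bowen}. Your lead suggestion of an explicit $N=3$ counterexample with, say, $V_i^p(F)=a_i\sqrt{F}$ is actually more concrete than what the paper does (the paper never constructs an instance; it merely asserts that the median can fall on either side of the mean), so if anything your version is the more complete proof of the ``does not guarantee'' claim; your bookkeeping caveats about concavity and the interior case are the right ones to check.
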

\begin{proof}
In order for $\Phi^{1p1v}$ to recover optimal funding levels, it must be that $\forall p \in P$, 
\begin{equation}\label{equal}
Median_{i} V_{i}^{p\prime} \left( F^{P} \right)=\frac{1}{N}\sum_{i=1}^N V_{i}^{p\prime} \left( F^{P} \right).
\end{equation}
The fact that condition \eqref{equal}  implies efficient funding follows from quasilinear utility (so that $V_i^{p\prime}(F^p)$ is monotone and decreasing). While there are some cases in which \eqref{equal} will hold, as discussed in \citet{bowen}, it may be that 
\begin{equation}\label{greater}
Median_{i} V_{i}^{p\prime} \left( F^{P} \right)>\frac{1}{N}\sum_{i=1}^N V_{i}^{p\prime} \left( F^{P} \right)
\end{equation}
or 
\begin{equation}\label{lessthan}
Median_{i} V_{i}^{p\prime} \left( F^{P} \right)<\frac{1}{N}\sum_{i=1}^N V_{i}^{p\prime} \left( F^{P} \right)
\end{equation}
and thus, generically, $\Phi^{1p1v}$ is not always efficient. Depending on whether \eqref{equal}, \eqref{greater}, or \eqref{lessthan} holds, it may be that: (i) $V^{p\prime}=1$, (ii) $V^{p\prime}<1$, or (iii) $V^{p\prime}>1$. That is $\Phi^{1p1v}$ may recover optimal funding levels, or lead to over or under funding on the margin. 
\end{proof}

Public good funding levels will tend to be higher and probably more accurate than under purely private contribution schemes, which is likely why most developed countries use democratic mechanisms for determining levels of public goods. However, clearly the median is often a poor approximation for the mean, especially for goods of value to smaller communities or for ``entrepreneurial public goods,'' the value of which is not widely understood at the time of funding. Such goods may well receive no funding from 1p1v---indeed, in practice small communities and entrepreneurial public goods are often funded primarily by charity or other private contributory schemes rather than 1p1v.\par

Some improvements on 1p1v are possible. \citet{bergstrombowen2, bergstrombowen1} argued that if there is some proxy for which citizens will benefit most from a good and we can tax them for it, 1p1v systems yield better outcomes. In such settings, everyone will agree on whether a given good is desirable. But this result begs the question in an important sense: if we know how much citizens benefit from a good, then any consensual mechanism will work well. Our goal is to find appropriate funding level \textit{without} assuming such prior centralized knowledge. 

\section{Design and Analysis} \label{analysis}

Consider the funding mechanism, which we refer to as the \textit{Quadratic Finance} (henceforth \textit{QF}) mechanism.
\begin{definition}[Quadratic Finance Mechanism] The Quadratic Finance Mechanism satisfies $$\{F^{p}\}_{p\in P}= \Phi^{QF}(c_i^p) = \left\{ \left(  \sum _{i}^{}\sqrt{c_{i}^{p}} \right) ^{2}\right\}_{p \in P}.$$
\end{definition} 
For the moment, assume $\Phi^{QF}$ is funded by the deficit
\begin{equation}
\sum _{p}^{} \left[  \left(  \sum _{i}^{}\sqrt{c_{i}^{p}} \right) ^{2}- \sum _{i}^{}c_{i} \right] 
\end{equation} 
being financed by a per-capita tax on each citizen. We also will, for the moment, assume that citizens ignore their impact on the budget and costs imposed by it.  Whether this is an innocuous assumption will depend on context as we discuss further in \ref{deficit}.\footnote{Indeed, it is easy to see that for goods with widespread support, the individual contributions will only supply a small fraction of total funding. For these goods, the mechanism serves as a mode of eliciting preferences more than anything else, and thus a per-capita tax may be problematic for goods with highly skewed benefits.} Many potential applications of this mechanism will involve dedicated government funding or funding from philanthropy, in which case citizens' impact on the deficit may matter much less (if at all). Nevertheless, for now it is easiest to understand the logic of the mechanism without worrying about the deficit. 

\subsection{Baseline analysis}\label{baseline}

Under this assumption, citizen \( i \)'s contribution to good  \( p \) will be chosen to maximize
\begin{equation}
V_{i}^{p} \left(  \left(  \sum _{j}^{}\sqrt{c_{j}^{p}} \right) ^{2} \right) -c_{i}^{p}. 
\end{equation}

Any positive contribution will thus have to satisfy
\begin{equation}\label{lrcontributions}
 \frac{2V_{i}^{p\prime} \left( F^{p} \right)  \left(  \sum _{j}^{}\sqrt{c_{j}^{p}} \right) }{2\sqrt{c_{i}^{p}}}=1 \leftrightarrow V_{i}^{p\prime} \left( F^{p} \right) =\frac{\sqrt{c_{i}^{p}}}{ \sum _{j}^{}\sqrt{c_{j}^{p}}} 
\end{equation}
by\ differentiation. 
\begin{proposition}[Optimality of Quadratic Finance] The Quadratic Finance mechanism $\Phi^{QF}$ guarantees optimal funding levels.
\end{proposition}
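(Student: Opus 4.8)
The plan is to sum the first-order condition \eqref{lrcontributions} across all contributing citizens and show that the right-hand side collapses to~$1$, so that the optimality criterion from the Optimality definition is recovered. First I would observe that, since every $V_i^p$ is increasing and concave, citizen $i$'s objective $V_i^p\big((\sum_j \sqrt{c_j^p})^2\big) - c_i^p$ is concave in $c_i^p$, so the first-order condition \eqref{lrcontributions} is both necessary and sufficient to characterize the equilibrium contribution of any citizen who contributes a positive amount. For a citizen at a corner ($c_i^p = 0$) the condition becomes the inequality $2V_i^{p\prime}(F^p)\,(\sum_j \sqrt{c_j^p}) \le 2\sqrt{c_i^p} = 0$ in the limiting sense, i.e. such a citizen is simply not pinning down anything on the margin; the key point is that these citizens contribute $\sqrt{c_i^p}=0$ and hence add nothing to the sum $\sum_i \sqrt{c_i^p}$.

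Next I would carry out the central computation: summing \eqref{lrcontributions} over exactly the set of citizens with $c_i^p > 0$ gives
\[
\sum_{i:\, c_i^p>0} V_i^{p\prime}(F^p) \;=\; \sum_{i:\, c_i^p>0} \frac{\sqrt{c_i^p}}{\sum_j \sqrt{c_j^p}} \;=\; \frac{\sum_{i:\, c_i^p>0}\sqrt{c_i^p}}{\sum_j \sqrt{c_j^p}} \;=\; 1,
\]
since the numerator and denominator are the same sum (the zero-contribution citizens contribute nothing to either). Because $V_i^{p\prime}\ge 0$ for the non-contributing citizens as well, the left-hand side equals $V^{p\prime}(F^p) = \sum_i V_i^{p\prime}(F^p)$ only if those citizens have zero marginal value at $F^p$; more carefully, I would note that in equilibrium a citizen with strictly positive marginal value $V_i^{p\prime}(F^p)>0$ would strictly prefer to make a small positive contribution (the marginal return to the first unit is effectively unbounded because of the $1/(2\sqrt{c_i^p})$ factor as $c_i^p\to 0^+$ whenever $\sum_j\sqrt{c_j^p}>0$), so at an interior-funded good every citizen who values it on the margin contributes, and thus $V^{p\prime}(F^p) = \sum_{i: c_i^p>0} V_i^{p\prime}(F^p) = 1$, matching case (ii) of the Optimality definition. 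For case (i), if $V^{p\prime}(0)\le 1$ then no profile of positive contributions can be sustained (the summed first-order conditions would force total marginal value to equal $1$ at a strictly positive funding level, contradicting concavity), so $F^p = 0$.

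The main obstacle I anticipate is handling the boundary/corner behavior cleanly — in particular, the marginal incentive to contribute the \emph{first} infinitesimal unit to a good that already has positive funding is formally infinite (the derivative of $\sqrt{c_i^p}$ blows up at $0$), so one has to argue carefully that ``$c_i^p>0$ exactly when $V_i^{p\prime}(F^p)>0$'' in equilibrium, and rule out the degenerate case where $\sum_j\sqrt{c_j^p}=0$ while some citizen would value a positive funding level. I would deal with this by treating the good-is-funded and good-is-unfunded cases separately: when $V^{p\prime}(0)>1$ there must be positive aggregate contribution in equilibrium, and then the summation argument above applies verbatim; when $V^{p\prime}(0)\le 1$ no citizen can profitably seed the good, so $F^p=0$ as required. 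Apart from this corner bookkeeping, the argument is essentially the one-line aggregation of \eqref{lrcontributions}, which is exactly why the $\sqrt{\cdot}$ functional form is the ``right'' one.
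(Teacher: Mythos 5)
Your proof is correct and takes the same route as the paper: the paper's entire argument is to sum the first-order condition \eqref{lrcontributions} across citizens so that the right-hand side telescopes to $1$, yielding $V^{p\prime}(F^p)=1$. The corner-case bookkeeping you add (non-contributors have zero marginal value because the marginal return to a first infinitesimal contribution is unbounded, and goods with $V^{p\prime}(0)\le 1$ receive no contributions) is exactly what the paper defers to the one-line remark following its proof, so your version is simply a more careful rendering of the same argument.
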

\begin{proof}
Adding the expression in \eqref{lrcontributions} across citizens yields \(V^{p\prime} \left( F^{p} \right) =1\). Thus, $\Phi^{QF}$ satisfies optimality. 
\end{proof}

It is easy to check that the conditions for any positive contribution being made are also optimal (viz. precisely when  \( V^{p\prime}>1 \)).

\subsection{Intuition}

We briefly discuss an alternative derivation of the QF rule to provide further intuition and insight into the logic of the mechanism. This derivation translates an appealing normative property of a generic solution to a public goods problem into a differential equation, and shows that QF is its solution. 

Each citizen has a ``degree of contribution'' to a collective good that is a function of how much she gives: $h\left(c_j\right)$ for some scalar function $h$. These contributions are at least quasi-additive across citizens so the total amount of funding is $g\left(\sum_i h\left(c_i\right)\right)$ for some scalar function $h$. How should citizens choose their degree of contribution?  One appealing normative property to counter free-riding might be: individuals should not act in line with solely self-serving motives.

How might this normative property govern behavior in public goods provision?\footnote{This normative property is closely related to a principle in moral philosophy famously formalized by \citet{kant} as the ``categorical imperative":  ``act only according to that maxim whereby you can, at the same time, will that it should become a universal law". Relatedly, \citet{roemer} has suggested that the right solution to the public good problem is to induce a change in human behavior so that every citizen acts according to a ``Kantian equilibrium.''} The standard logic of free-riding is that each citizen imagines that she would be willing to contribute to a public good if, by her doing so, everyone else would as well. For example, each citizen might be willing to see her taxes increase by 1\% to fund a public good, but would be unwilling to contribute unilaterally. 

Thus following this logic, it may be desirable to have a mechanism such that a citizen $j$ could, by increasing $h\left(c_j\right)$ by 1\% see funding increase by 1\% of $\sum_i h\left(c_i\right)$. Such a rule can be represented by a simple ordinary differential equation.  Namely, for each $j$ we want
\begin{equation}\label{CI}
\frac{\partial g\left(\sum_i h\left(c_i\right)\right)}{\partial c_j}=\frac{\sum_i h\left(c_i\right)}{h\left(c_j\right)}.
\end{equation}
This differential equation directly implies QF. To see this, note that
\begin{equation}
\frac{\partial g\left(\sum_i h\left(c_i\right)\right)}{\partial c_j}=g'\left(\sum_i h\left(c_i\right)\right)h'\left(c_j\right)
\end{equation}
so that \eqref{CI} becomes
\begin{equation}
g'\left(\sum_i h\left(c_i\right)\right)h'\left(c_j\right)=\frac{\sum_i h\left(c_i\right)}{h\left(c_j\right)}.
\end{equation}
Structurally, the $g'$ term must treat all elements in the sum of $h$'s symmetrically and the $h'$ term must only include $c_j$.  Thus we must have that 
$$g'\left(\sum_i h\left(c_i\right)\right)=k\sum_i h\left(c_i\right)\iff g'(x)=kx$$
for some constant $k$ and
$$h'\left(c_j\right)=\frac{1}{k h\left(c_j\right)}\iff h'(x)=\frac{1}{kh(x)}.$$
Integrating these we obtain that $g(x)=\frac{k}{2}x^2+m$ and $h(x)=\frac{2\sqrt{x}}{k}+n$.  If we want the funding of a project with no contributions to be $0$, $m$ and $n$ should both be $0$, narrowing our solution to $g(x)=\frac{k}{2}x^2$ and $h(x)=2\frac{\sqrt{x}}{k}$.  If we want a mechanism in which a good with a single contributor is funded as in private contributory schemes, we obtain $k=2$ and thus QF.

\subsection{Properties of the Quadratic Finance mechanism}\label{properties}

This discussion leads us naturally to a consideration of the properties of the QF mechanism. 

First, QF is homogeneous of degree one in the sense that if a fixed set of citizens are contributing and double their contributions, the funding also doubles. Homogeneity of degree one is a useful and reassuring property, as it implies:

\begin{itemize}
	\item Changing currencies makes no difference to the mechanism.

	\item Groups can gain nothing by splitting or combining projects with the same group of participants.

	\item It matters little precisely how frequently the mechanism is run (whether donations are aggregated at the monthly, daily or yearly level) unless the pattern of donations is temporally uneven in a systematic way.
\end{itemize}

Second, consider the case in which every contributing citizen makes an equal contribution, say of one unit, as we vary the number of citizens contributing  \( N_{c} \).  In this case, the funding received is  \( N_{c}^{2} \). Thus, holding fixed the amount of the contribution, the funding received grows as the square of the community size.  This property is also  intuitive and reassuring, as we saw above that under purely private contributions, there is a factor  \( N_{c} \) underfunding of goods on the margin.  It is thus natural to solve this underfunding by scaling funding levels by the community size.

Third, and relatedly, note that a community that splits in half with roughly similar contribution profiles will receive half the aggregate funding of the total community: both halves will receive one quarter.  This feature of the mechanism is a clear deterrent against fragmentation and atomization, and is the core reason why the QF mechanism can solve the public goods problem.  However, this feature does not at all imply that under QF only extremely large communities will form. Different collections of citizens will have different purposes in using their funds, some in smaller groups and some in larger ones. 

The trade-off between preference heterogeneity and the benefits of scale is well-known to political economists. For example, this trade-off is discussed in the literature on the optimal size of nations \citep{alesina}. QF does not prejudge the optimal size of collectives, but unlike purely private contributory schemes or 1p1v offers a mechanism that creates neutral incentives among social organization of different sizes. This feature turns out, however, to require much greater funding \textit{for a given contribution profile} to larger grouping for the obvious collective action reason (see below): each citizen will tend to contribute less, absent this incentive, to larger groupings where she receives a smaller share of relevant benefits.

Fourth, note that the mechanism reverts to a standard private good in the case that a single citizen attempts to use the mechanism for her own enrichment.  In cases in which the overwhelming bulk of contributions come from one citizen, other contributions to the sum of square roots approximately drop out and we are left with the square of the square root, which is simply the contribution itself.  More broadly, as we approach the limit in which goods are private, the mechanism treats the contributions as contributions to a private good.

Fifth, and really just to summarize, the mechanism provides much greater funding to many small contributions than to a few large ones.  This feature does not result from imposing an external principle of equity or distributive justice, though there may be good reasons from those perspectives to admire the outcome it delivers. It instead results directly from the logic of the mechanism. This logic aligns well with a central concern in democratic theory since at least \citet{madison} and famously associated with Mancur Olson's \citeyearpar{olson} \textit{Logic of Collective Action}: Large communities of citizens that each receive only a small benefit tend to be disadvantaged by private contributory schemes relative to concentrated interests.

It is useful to highlight the more primitive attributes of our design that underlie the above properties by comparing QF to other efficient mechanisms for funding public goods. As mentioned in \ref{literature}, QF is a marked improvement over other proposals in certain ways, though a more rigorous comparison is left for future work. In particular, under QF, individuals have \textit{private information}---there is no need for complete information or a centralized process as in \citet{gl} and \citet{hz}. In addition, unlike \citet{gl}, QF respects \textit{individual rationality} constraints, which is the key to solving the free-rider problem. Any citizen with a positive valuation of a particular good has incentives to contribute. QF is also flexible in the sense that it does not require an enumeration of all participants, and does not rely on preconceived estimates of the number of individuals who benefit from particular goods. Contrast this flexibility with the mechanism proposed in \citet{falkinger} which achieves individual rationality by rewarding and penalizing deviations from the average contribution in order to achieve efficient funding levels---such a mechanism is obviously problematic for goods of value to small communities, and requires a full specification of the set of taxable citizens. 

Some of these properties may make a QF system vulnerable to collusion or manipulation, as we will return to in \ref{collusion}. But, overall we view these properties as heartening confirmations that our analysis addresses a wide range of issues relevant to public goods problems.

\subsection{User interface}

Precisely what the QF mechanism would ``look like'' is beyond our scope here, but a brief description of a possibility will hopefully help readers imagine how it might be feasible. Any citizen could at any time propose a new organization to be included in the system. Depending on the context, there might be a more or less extensive process of being approved to be listed in the system by an administrator; this approval process would be especially important for a philanthropically-sponsored implementation, as the philanthropist is unlikely to be willing to fund just any project. 

Citizens could contribute their funds towards (or possibly against, see \ref{negative} below) any listed project at some regular interval, such as monthly.  Citizens would be given some (possibly imperfect and delayed, for security purposes) indication of the total funding level of various projects.  Such information would help citizens determine: (i) the amount of funding projects would receive if they contributed a bit extra (likely aided by appropriate visualizations and ``calculators'') to a particular project; and (ii) whether a project has enough funding to be successful. The dissemination of funding information would help avoid fragmentation. Given the far greater funding that a project supported by many can receive as compared to a project with a few supporters, there would be far less incentive than under private contributions for a thousand projects to proliferate.

As we discuss in \ref{collusion} and \ref{concavity} below, various more detailed features of the system would be needed to help ensure security and enable coordination among participants.  Furthermore, the precise look and feel of the system requires much more thought and even might affect the formal rules in some way. 

\subsection{Incorporating the deficit}\label{deficit}

In the preceding analysis, we assumed that citizens ignore their impact on the deficit for clarity.  We will now see how the elimination of this assumption may alter our results. 

Suppose that citizen  \( i \) has a shadow value of  \(  \lambda _{i}\) on reducing the budget deficit. We can think of $\lambda_i$ as the fraction of the deficit that will be funded by taxing citizen $i$. Alternatively, as we will explore in \ref{budgeted} below, $\lambda_i$ can be interpreted as the cost to citizen $i$ of reduced funding of other public goods that a greater deficit will require. 
\begin{definition}[Aggregate Cost of Deficit]
The aggregate cost of an increased deficit is  \(  \Lambda  \equiv  \sum _{i}^{} \lambda _{i} \).
\end{definition}
We assume that \(  \lambda _{i} \) is on the order of  \( \frac{1}{N} \), so that the aggregate cost of an increased deficit, $\Lambda$ is around 1. Under these assumptions, in a large society no citizen is financing a large share of the deficit. Citizen  \( i \) seeks to maximize in her contributions to project  \( p \)
\begin{equation}
 V_{i}^{p} \left(  \left(  \sum _{j}^{}\sqrt{c_{j}^{p}} \right) ^{2} \right) -c_{i}^{p}- \lambda _{i} \left(  \left(  \sum _{i}^{}\sqrt{c_{i}^{p}} \right) ^{2}- \sum _{i}^{}c_{i} \right) .
\end{equation}
The associated first-order condition for maximization is
\begin{equation}\label{deficitfoc}
\frac{2 \left[ V_{i}^{p\prime} \left( F^{p} \right) - \lambda _{i} \right]  \left(  \sum _{j}^{}\sqrt{c_{j}^{p}} \right) }{2 \sqrt{c_{i}^{p}}}=1- \lambda _{i} \leftrightarrow V_{i}^{p\prime} \left( F^{p} \right) - \lambda _{i}=\frac{\sqrt{c_{i}^{p}}}{ \sum _{j}^{}\sqrt{c_{j}^{p}}} \left( 1- \lambda _{i} \right).
\end{equation}
Aggregating the expression in \eqref{deficitfoc} across all citizens yields
\begin{equation}
V^{p\prime} \left( F^{p} \right) - \Lambda =1-\frac{ \sum _{j}^{} \lambda _{j}\sqrt{c_{j}^{p}}}{ \sum _{j}^{}\sqrt{c_{j}^{p}}} \leftrightarrow V^{p\prime} \left( F^{p} \right) - \Lambda  \approx 1 \leftrightarrow V^{p\prime} \left( F^{p} \right)  \approx 1+ \Lambda.
\end{equation}
The approximation follows from the fact that  \(  \lambda _{i} \) is of order  \( \frac{1}{N} \).  In a large population the denominator in the square root sum ratio is much larger than the numerator. Thus, underfunding to good $p$,  when  \(  \lambda _{i} \) is of order \( \frac{1}{N} \) is on the order of \(1+\Lambda \). Underfunding is thus bounded by the sum of the shadow values $\lambda_i$ of reducing the deficit.


This analysis suggests that once we account for the deficit, the QF mechanism \textit{does not yield efficiency}. Instead it yields underfunding of all public goods by roughly  \( 1+ \Lambda  \). How to interpret this conclusion is somewhat subtle, and further analyses must be done to illustrate its consequences in a wider range of cases. In many cases, incorporating citizens' impact on the deficit may not fundamentally change our conclusions.  We now briefly run through some of these cases, though we acknowledge that many further analyses are required to make  general statements. In addition, experimentation is necessary to understand the settings in which these circumstances are more or less likely to obtain.

\begin{itemize}
	\item[\textbf{1.}] First consider the case in which most of the goods funded by the mechanism only benefit a relatively small fraction of the community and negative contributions are not allowed.  In this case, there is little or no problem, because our analysis relies on \textit{negative} contributions being made by all of the citizens that do not benefit (the left-hand side of the first-order condition in \eqref{deficitfoc} is negative). As long as negative contributions are disallowed (as in the baseline analysis in \ref{baseline}), most contributions will drop towards  \(\Lambda \) and we will obtain a conclusion very close to the one obtained by ignoring the financing considerations.  

	\item[\textbf{2.}]  In some cases with negative contributions, there will underfunding but it will still be, generically, an improvement over under private schemes because: (i) the magnitude of underfunding will often be moderate compared to purely private contributions and (ii) underfunding will be constant across different goods. We may want to allow for negative contributions because certain ``goods'' are public ``bads'' for some, such as financing hate speech.  Allowing such ``shorting'' may be undesirable in some cases, as we discuss in \ref{negative} below. Consider the setting in which some non-wasteful tax is used to fund the deficit so that \(  \Lambda =1 \). In this case, QF will lead to  \( V^{p\prime}=2 \), which is underfunding of public goods, but not severe underfunding relative to private contributory schemes. Importantly, the degree of underfunding is neutral across different goods and thus approximately optimal in the sense of Ramsey-Atkinson-Stiglitz taxation, as will be discussed in more depth in \ref{budgeted} below.  Furthermore, these considerations are entirely irrelevant for goods consumed by a small part of the population if we assume that most citizens will not be bothered to make tiny negative contributions to goods from which they do not benefit. Finally, this pessimistic conclusion can be overcome by reducing the cost of contributions to be proportionally smaller than the amount they influence outcomes.\footnote{If everyone is perfectly rational, this occurs in the extreme case in which a minuscule contribution affects funding by a large amount. We would not advocate this in practice as the risks of manipulation of such a system seem much worse than the underfunding by a factor of 2.}

	\item[\textbf{3.}] In cases with an external philanthropist funding the subsidies in the mechanism, as will be discussed in \ref{budgeted}, citizens are likely to ignore their impact on the deficit. When there is an external philanthropist, there is no need for a tax, and thus citizens are less likely to worry about their impact on the deficit in choosing their contribution levels.\footnote{It would be interesting, in future work, to compare the results under pure philanthropically-funded subsidies vs. taxes, as discussed in \citet{roberts1992}.} There will be some underfunding, but this is determined by the constraints of the philanthropist and not by a financing quirk in the mechanism.
\end{itemize}

In short, while incorporating citizens' impact on the deficit creates some complications and potential deviations from optimality, the impact may be small or irrelevant in particular cases of interest. Note, however, that the cases discussed here are not in any sense general. We have not discussed the precise formal conditions under which QF represents an improvement over purely private contributions, we have merely provided suggestive conditions that may or may not obtain in settings of interest. We have also not offered an analysis of when QF is superior to 1p1v when individuals take into account their impact on the deficit.  Furthermore, the cases require further analysis: case \textbf{2} (which allows negative contributions) and case \textbf{3} (which relies on philanthropic funds) both present a host of other potential issues that will be discussed in more detail though not completely resolved in \ref{negative} and \ref{budgeted}, respectively.

\section{Variations and Extensions}\label{extensions}

The above sketch leaves us with many open questions.  In this section we address a handful of the most critical outstanding questions about the mechanism, and highlight particularly important directions for future analysis.

\subsection{Budgeted matching funds}\label{budgeted}


In many practical applications the funding for QF is likely to come from philanthropists or some dedicated government appropriation rather than from unlimited tax revenue. There are clear theoretical advantages of such philanthropic (or dedicated government) funding. If participants do not personally care about the philanthropist's wealth, the issues around incorporating impacts on the deficit (as discussed in \ref{deficit} above) disappear. However, even the wealthiest philanthropists do not have infinite funds and thus cannot simply agree to finance arbitrarily large deficits.  In this subsection we describe a variant on the QF mechanism that can limit the funding required. 

Consider a rule that is an  \( \alpha \) mixture of QF with a  \( 1- \alpha  \) weight on un-matched private contributions. We call this the Capital-constrained Quadratic Finance (CQF) mechanism. 
\begin{definition}[Capital-constrained Quadratic Finance Mechanism] The Capital-constrained Quadratic Finance Mechanism $\Phi^{CQF}$ satisfies
 \[ F^{p}= \Phi^{CQF}(\cdot) = \alpha  \left(  \sum _{i}^{}\sqrt{c_{i}^{p}} \right) ^{2}+ \left( 1- \alpha  \right)  \sum _{i}^{}c_{i}^{p}. \]
\end{definition}
The first feature to note about $\Phi^{CQF}$ is that for any budget  \( B \),  \(\alpha\) may be adjusted to ensure the budget is not exceeded.  To see this, note that when  \(  \alpha  \rightarrow 0 \) the mechanism is both directly self-financing and, indirectly, the amount invested in the public good falls for the reasons we have discussed above.  Thus, the deficit can be eliminated by setting  \(\alpha\) low enough.  This flexibility ensures that a philanthropist can reliably set a low level of  \(\alpha\) and perhaps gradually increase it over time to increase support.

Second, note that no one would ever choose to contribute outside this system (no one's contributions through it are taxed), so CQF is individually rational. 
 \begin{proposition} The mechanism $\Phi^{CQF}$ is individually rational in the sense that $\frac{ \partial \Phi^{CQF}}{ \partial c_{i}^{p}} > \frac{ \partial \Phi^{priv}}{ \partial c_{i}^{p}}$.
 \end{proposition}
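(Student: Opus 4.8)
The plan is to verify the claimed inequality by directly differentiating both mechanisms and comparing marginal returns term by term. For the private contributions mechanism, $\Phi^{priv}(c_i^p) = \{\sum_i c_i^p\}_{p\in P}$, so the component for good $p$ has constant marginal return $\frac{\partial \Phi^{priv}}{\partial c_i^p} = 1$. For CQF I would differentiate $F^p = \alpha\bigl(\sum_j \sqrt{c_j^p}\bigr)^2 + (1-\alpha)\sum_j c_j^p$ with respect to $c_i^p$, applying the chain rule to the quadratic-of-square-roots term exactly as in the derivation leading to \eqref{lrcontributions}, which yields
\[
\frac{\partial \Phi^{CQF}}{\partial c_i^p} = \alpha\,\frac{\sum_j \sqrt{c_j^p}}{\sqrt{c_i^p}} + (1-\alpha).
\]

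The key step is then the elementary observation that $\frac{\sum_j \sqrt{c_j^p}}{\sqrt{c_i^p}} \ge 1$, since the numerator contains the summand $\sqrt{c_i^p}$ together with the remaining nonnegative terms $\sqrt{c_j^p}$, $j\ne i$. Substituting gives $\frac{\partial \Phi^{CQF}}{\partial c_i^p} \ge \alpha\cdot 1 + (1-\alpha) = 1 = \frac{\partial \Phi^{priv}}{\partial c_i^p}$, with strict inequality whenever $\alpha>0$ and at least one other citizen contributes a positive amount to $p$, and trivially when $c_i^p=0$, where the CQF marginal return is unbounded. Combining this with the already-noted fact that no contribution routed through CQF is taxed, whereas contributing outside the system earns only the private marginal return, every citizen weakly prefers to contribute through CQF, which is the sense of individual rationality being asserted.

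The only real subtlety — and the point I would be most careful to state precisely — is the boundary behavior, since the proposition is phrased with a strict inequality: when $i$ is the sole contributor to $p$ the two marginal returns coincide, so the claim is most accurately read as ``$\ge$, with strictness generic,'' and differentiability of the square-root term requires $c_i^p>0$, the $c_i^p=0$ case being handled by a one-sided (limiting) argument. Beyond this there is no genuine obstacle: the computation is a one-line application of the chain rule, and the comparison reduces to the inequality $\sqrt{c_i^p} \le \sum_j \sqrt{c_j^p}$.
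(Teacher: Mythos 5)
Your proof is correct and follows essentially the same route as the paper: differentiate $\Phi^{CQF}$ to get $\alpha\frac{\sum_j\sqrt{c_j^p}}{\sqrt{c_i^p}}+(1-\alpha)$, compare with the constant marginal return of $1$ under $\Phi^{priv}$, and observe that the ratio multiplying $\alpha$ is at least $1$. Your attention to the boundary cases (equality when $i$ is the sole contributor, and the one-sided limit at $c_i^p=0$) is actually slightly more careful than the paper's proof, which asserts the strict inequality without qualification.
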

 \begin{proof}
To show that CQF is individually rational, compare the marginal impact of individual $i$'s a contribution to good $p$ through $\Phi^{CQF}$ to the marginal impact of her contribution to $p$ through a separate mechanism $\Phi^{priv}$. We showed in \ref{model} that the marginal value of a contribution under $\Phi^{priv}$  is equal to 1. Consider the marginal contribution under $\Phi^{CQF}$: 
\begin{equation}
\frac{ \partial F^{p}}{ \partial c_{i}^{p}}= \alpha \frac{ \sum _{j}^{}\sqrt{c_{j}^{p}}}{\sqrt{c_{i}^{p}}}+1- \alpha. 
\end{equation} 
The factor multiplying  \(  \alpha  \) is by construction always at least  \( 1 \), so this always exceeds unity, the marginal impact of a contribution made through a separate, purely private channel.  
\end{proof}

The individual rationality property suggests that CQF is consistent with existing within a broader society in which private contributory schemes are the norm, not just in terms of funding but also in terms of getting citizens to ``play ball" with the mechanism.

Third, consider equilibrium incentives under CQF.  In choosing her contribution to good  \( p \), citizen  \( i \) maximizes
\begin{equation}\label{clrmax}
 V_{i}^{p} \left(  \alpha  \left(  \sum _{j}^{}\sqrt{c_{j}^{p}} \right) ^{2}+ \left( 1- \alpha  \right)  \sum _{j}^{}c_{j}^{p} \right) -c_{i}^{p}.
\end{equation}
\begin{proposition} If the population $M$ funding good $p$ is large relative to any individual contribution $c_i^p$, then $\Phi^{CQF}$ leads to underfunding relative to purely private contributions. Underfunding for good $p$ is $\frac{1}{\alpha}.$ When $\frac{1}{\alpha} \ll M$, $\Phi^{CQF}$ yields less underfunding than $\Phi^{priv}$.
\end{proposition}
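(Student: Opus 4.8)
The plan is to reprise the differentiate-then-aggregate argument used for $\Phi^{QF}$ in \ref{baseline}, now applied to the mixed objective \eqref{clrmax}. First I would differentiate \eqref{clrmax} in $c_i^p$, using the marginal $\frac{\partial F^p}{\partial c_i^p}=\alpha\frac{\sum_j\sqrt{c_j^p}}{\sqrt{c_i^p}}+1-\alpha$ already computed in the preceding individual-rationality proof, so that any interior best response of citizen $i$ satisfies
\[
V_i^{p\prime}(F^p)\left[\alpha\frac{\sum_j\sqrt{c_j^p}}{\sqrt{c_i^p}}+1-\alpha\right]=1
\iff
V_i^{p\prime}(F^p)=\frac{\sqrt{c_i^p}}{\alpha\sum_j\sqrt{c_j^p}+(1-\alpha)\sqrt{c_i^p}}.
\]
Writing $s_i\equiv\sqrt{c_i^p}/\sum_j\sqrt{c_j^p}$ for the square-root contribution shares (which sum to one, the sum over $j$ running effectively over the $M$ positive contributors to $p$), this reads $V_i^{p\prime}(F^p)=\frac{1}{\alpha}\cdot\frac{s_i}{1+\frac{1-\alpha}{\alpha}s_i}$.

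Next I would sum over the $M$ contributors, obtaining $V^{p\prime}(F^p)=\frac{1}{\alpha}\sum_i\frac{s_i}{1+\frac{1-\alpha}{\alpha}s_i}$. Since $0\le s_i\le 1$, each summand lies between $\alpha s_i$ and $s_i$, which already yields the exact two-sided bound $1\le V^{p\prime}(F^p)\le\frac{1}{\alpha}$; in particular $V^{p\prime}(F^p)>1$ whenever more than one citizen contributes, and since $V^p$ is concave ($V^{p\prime}$ decreasing) the equilibrium funding level lies strictly below the first best---this is the underfunding claim. For the sharper statement that the underfunding factor equals $\frac{1}{\alpha}$, I would formalize ``population large relative to any individual contribution'' as $\max_i s_i\le\varepsilon$ with $\varepsilon$ small; then $1\le 1+\frac{1-\alpha}{\alpha}s_i\le 1+\frac{1-\alpha}{\alpha}\varepsilon$ uniformly in $i$, so $\frac{1}{\alpha+(1-\alpha)\varepsilon}\le V^{p\prime}(F^p)\le\frac{1}{\alpha}$, which collapses to $V^{p\prime}(F^p)=\frac{1}{\alpha}$ as $\varepsilon\to0$.

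For the comparison with $\Phi^{priv}$ I would invoke the Proposition on Suboptimality of Private Contributions: there each contributor's first-order condition is $V_i^{p\prime}(F^p)=1$, so aggregating over the same $M$ contributors gives $V^{p\prime}(F^p)=M$. Hence the CQF marginal value $\frac{1}{\alpha}$ is closer to the optimal value $1$ than the private-contributions marginal value $M$ exactly when $\frac{1}{\alpha}\ll M$, which is the asserted comparison.

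The main obstacle is the passage from the exact formula to the clean factor $\frac{1}{\alpha}$. The bound $1\le V^{p\prime}(F^p)\le\frac{1}{\alpha}$ is immediate, but pinning down the limit requires the hypothesis to control $\max_i s_i$, and strictly this is a property of the equilibrium contribution profile rather than a pure comparative static: one ought to check that the fixed point does not feature a contributor whose square-root share stays bounded away from zero as $M$ grows. I would sidestep this by stating the result in the finite-$\varepsilon$ form above, with the underfunding factor pinned to the interval $[\,1/(\alpha+(1-\alpha)\varepsilon),\,1/\alpha\,]$; this makes the ``$M$ large'' hypothesis precise as a bound on $\max_i s_i$ and still delivers the stated comparison with $\Phi^{priv}$.
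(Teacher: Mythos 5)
Your argument is essentially the paper's: the paper's proof also differentiates \eqref{clrmax} to get $V_i^{p\prime}\bigl[\alpha\frac{\sum_j\sqrt{c_j^p}}{\sqrt{c_i^p}}+1-\alpha\bigr]=1$, drops the $(1-\alpha)$ term on the grounds that the individual's square-root share is negligible, and sums to obtain $V^{p\prime}=\frac{1}{\alpha}$; you simply make that approximation precise with the exact share formula, the two-sided bound $1\le V^{p\prime}\le\frac{1}{\alpha}$, and the $\max_i s_i\le\varepsilon$ formulation, and you make explicit the comparison with $\Phi^{priv}$ that the paper relegates to the surrounding discussion. One small slip: the summand $\frac{s_i}{\alpha+(1-\alpha)s_i}$ lies between $s_i$ and $\frac{s_i}{\alpha}$ (since the denominator lies in $[\alpha,1]$), not ``between $\alpha s_i$ and $s_i$'' as written---the two-sided bound you then state is the correct one, so this is only a typo in the intermediate step.
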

\begin{proof}
The first order condition of (\ref{clrmax}) is
\begin{equation}
 V_{i}^{p'} \left[  \alpha \frac{ \sum _{j}^{}\sqrt{c_{j}^{p}}}{\sqrt{c_{i}^{p}}}+1- \alpha  \right] =1 \leftrightarrow V_{i}^{p\prime} \approx \frac{\sqrt{c_{i}^{p}}}{ \alpha  \sum _{j}^{}\sqrt{c_{j}^{p}}} \leftrightarrow V^{p\prime}=\frac{1}{ \alpha }. 
\end{equation}
The approximation comes from the fact that $c_i^p \ll M$. 
\end{proof}

The approximation requires that the population funding the good is large relative to any individual. This approximation is natural for a genuinely public good; for goods supplied to very small communities or citizens, funding will be greater than implied by this approximation, but this extra funding will mostly come through the private channel and not be subsidized by the philanthropist and thus should not be of great concern to her.  

Thus, the CQF mechanism will lead to underfunding of the good by a factor of  \( \frac{1}{ \alpha } \) as compared to the (rough) underfunding under purely private contributions by a factor of the typical size of the benefiting community. Assuming  \( \frac{1}{ \alpha } \) is small relative to  \( M \), CQF can dramatically improve funding relative to purely private contributions. 

Furthermore, subject to the budget constraint, funding is approximately optimally allocated across different public goods in the sense of \citet{ramsey} taxation and the important extension to allow for heterogenous consumers by \citet{atkinsonstiglitz}. The basic idea of Atkinson-Stiglitz taxation is that, when considering commodity taxation, it is optimal to distort the consumption of all goods equally, so that the marginal rate of substitution across all goods is the same. To see how the Atkinson-Stiglitz logic applies in our setting, we consider the planner's problem which is the same as in the baseline set up, but with a new interpretation of the budget constraint. The planner seeks to maximize  \(  \sum _{i}^{}V_{i} \left( F^{p} \right)  \) subject to the budget constraint which is simply  \(  \sum _{p}^{}F^{p}=B \). Solving the constrained maximization problem, 
 \begin{equation}
 \sum _{i}^{}V_{i} \left( F^{p} \right) - \lambda  \left(  \sum _{p}^{}F^{p}-B \right), 
\end{equation}  
gives \( V^{p\prime}= \lambda  \), i.e.  \( V^{p\prime} \) is a constant. Thus our above result that  \( V^{p\prime}=\frac{1}{ \alpha } \)  suggests that CQF funding is optimally allocated across goods if  \(  \alpha  \) is chosen to just exhaust the budget.

Of course, this analysis ignores the fact that funding different goods differentially may help stimulate more private contributions. We also ignore the fact that CQF does not quite achieve  \( V^{p\prime}=\frac{1}{ \alpha } \) as there are also some contributions through the private channel. \citeauthor{atkinsonstiglitz}'s analysis is much more careful on these points and gives (fairly specific) conditions under which equal distortion ratios are nonetheless optimal.  Verifying conditions when CQF is exactly optimal is an interesting direction for future research, but is beyond the scope of this paper.

Some of the underfunding implied by CQF may not be entirely undesirable. This underfunding may balance under-investment in private goods creation required by the distortionary taxes that will often be necessary to fund the mechanism. As we now discuss, the underfunding implied by CQF may also help to deter collusion.

\subsection{Collusion and fraud}\label{collusion}

The central vulnerabilities of QF are collusion and fraud. These vulnerabilities are common to most other mechanisms designed based on the assumption of unilateral optimization. Collusion takes place when multiple agents act in their mutual interest to the detriment of other participants. Fraud takes place when a single citizen misrepresents herself as many.

It is useful to spell out precisely what these threats are and the harms they could bring to QF or CQF.  Consider, for concreteness, a case of CQF with  \(  \alpha =.1 \). First suppose one citizen is able to misrepresent herself fraudulently as 20.  If she contributes  \( x \) dollars in the capacity of each of these citizens, she will pay  \( 20x \) but her cause (which could just deposit to her bank account) will receive
 \[ .1\cdot \left( 20\sqrt{x} \right) ^{2}=40x. \] 
Thus, on net, she doubles her money. This is a sure arbitrage opportunity and could easily convert QF into a channel for lining the pockets of the fraudster.  The minimum fraud size required to run this racket at positive profit is  \( \frac{1}{ \alpha }. \)

A perfectly colluding group of citizens could achieve something similar. The colluding group may all be participants in the mechanism, or they may be partially formed of participants in the mechanism together with one or more outside observers with an interest in the mechanism's outcome. Collusion can either happen ``horizontally'', between multiple participants with similar goals, or ``vertically'', between one or more participants in the mechanism and an outside participant (or a participant in a different side of the mechanism, e.g. a potential recipient of a subsidy) that can offer conditional payments (i.e. bribes) to induce the participants to behave in particular ways. Again, if the size of this group is greater than  \( \frac{1}{ \alpha } \) and the group can perfectly coordinate, there is no limit (other than the budget) to how much it can steal.

However, note that unilateral incentives run quite strongly against certain forms of collusion.  Consider a colluding group with  \( 100 \) members each investing \$1000, which is thus funded at a level of  \( .1\cdot 100^{2}\cdot 1000=\)\$1,000,000. If this cartel divides the spoils equally among its members, the group members each receive \$10,000 and thus achieve a net benefit of \$9000.  Now consider what happens if one member decides to defect and contribute nothing.  The funding level is now  \( 99^{2}\cdot 100= \)\$980,100. The defecting member would see her pay out fall to \$9801, but would have saved \$1000 and thus on net would now be making \$801 more than she was before.  There is thus very little incentive for any member of the cartel to actually participate. Unless activity can be carefully monitored and actual payment levels directly punished, defection is likely to be very attractive and the cartel is likely to die the death of a thousand cuts.  Simply sharing revenue with participants is not sufficient to sustain collusion.

There is a broader point here.  If perfect harmonization of interests is possible, purely private contributions lead to optimal outcomes. QF is intended to overcome such lack of harmonization and falls prey to manipulation when it wrongly assumes harmonization is difficult. So we're led into a bit of a paradox: QF seeks to foster community direction through its design, but in doing so QF relies on the strong ties of community flowing outside the design not existing.

The appropriate way of deterring fraud and collusion will depend on the affordances of the system.  First consider fraud, which is the simpler and more devastating of the issues.  If fraud cannot be reasonably controlled, QF simply cannot get off the ground; it will immediately become a money pump for the first fraudster to come along.  Note, however, that this is true of nearly any system with a democratic flavor: 1p1v can easily be exploited through fraud.  The simplest and most clearly necessary solution to fraud is an effective system of identity verification.  Beyond identity verification, relatively small groups giving large contributions and thus receiving large funding should be audited when possible to determine if fraud has occurred and large penalties (much larger than the scale of the fraud, to adjust for the chance of detection) should be imposed on the fraudsters and transferred to other, honest citizens.

Collusion is a subtler and more pernicious problem to root out, and perhaps the greatest challenge for QF, given the tension between community building and collusion deterrence.  In all cases, a modest value of  \(  \alpha  \) and auditing of small, highly funded groups will help deter tight collusive groups. Yet the best approach to deterring broader collusion will depend on the nature of the setting: a case in which citizens are friendly and all know each other, as in a small town, will differ from the case in which participants have low trust for each other and are highly diverse, as in a blockchain community.

\subsection{Negative contributions}\label{negative}

Not all public projects bring benefits alone; some may harm certain citizens by creating negative externalities such as pollution or offense. The existence of negative externalities does not immediately imply we should allow negative contributions to reflect these harms. Some of these negative externalities can be addressed directly through legislation. Furthermore, there are dangers of allowing citizens to defund projects they don't like.  Allowing negative contributions opens many thorny issues as we discussed in \ref{deficit} above. However, in some cases the benefits of allowing the expression of negative externalities will outweigh the potential costs and thus negative contributions will be desirable. While further theoretical analyses are required to fully understand the conditions under which negative contributions would be desirable, we provide a brief discussion here of how allowing negative contributions changes the baseline analysis presented in \ref{baseline}.

The natural extension of QF to allow negative contributions is one in which citizens may choose to defund a public good according to the same cost structure.  
\begin{definition}[$\pm$ Quadratic Finance Mechanism] The $\pm$ Quadratic Finance Mechanism, $\Phi^{\pm QF}(\cdot)$ satisfies \[ F^{p}= \Phi^{\pm QF}\left(  \sum _{i}^{} \pm _{i}\sqrt{c_{i}^{p}} \right) ^{2}, \]
where  \(\pm _{i}\) is positive or negative at the discretion of citizen  \( i \).
\end{definition}
Citizens with  \( V_{i}^{p'} \geq 0 \) or  \(  \lambda _{i} \) in the cases where they account for their budget impact will choose the positive sign; those with the opposite will choose the negative sign.  

We already know the first-order condition for positive contributors; let's consider it for negative contributors (for simplicity we focus on the deficit-ignoring, fully financed case):
\begin{equation}\label{pmlr}
 -\frac{V_{i}^{p\prime} \left( F^{p} \right)  \left(  \sum _{j}^{} \pm _{j}\sqrt{c_{j}^{p}} \right) }{\sqrt{c_{i}^{p}}}=1 \leftrightarrow V_{i}^{p\prime} \left( F^{p} \right) =-\frac{\sqrt{c_{i}^{p}}}{ \sum _{j}^{} \pm _{j}\sqrt{c_{j}^{p}}}. 
\end{equation}
Note that together with the first-order condition for those making positive contributions, \eqref{lrcontributions} and \eqref{pmlr} can be summarized as
\begin{equation}
 V_{i}^{p\prime} \left( F^{p} \right) =\frac{ \pm _{i}\sqrt{c_{i}^{p}}}{ \sum _{j}^{} \pm _{j}\sqrt{c_{j}^{p}}}. 
\end{equation}

Aggregating across citizens yields \( V^{p\prime} \left( F^{p} \right) =1,\) as is optimal.  Allowing negative contributions is thus desirable in the following sense: without allowing them, there may be negative externalities of a project that are not internalized into its funding. 

However, as noted in \ref{deficit} above, it is principally allowing negative contributions that leads to underfunding if citizens consider their impact on the deficit.  More broadly, negative contributions may be a quite powerful way to deter collusive schemes as they offer a way for any citizen to be a ``vigilante enforcer'' against fraud and abuse.  The downside of this benefit, however, is obviously that, in some cases, absolute free speech and other protections may lead us to distrust such vigilantism.  

In short, there are a variety of costs and benefits to allowing negative contributions and we suspect their desirability will vary across contexts.

\subsection{Variations on functional form}\label{form}

One might naturally wonder if the functional form we propose is uniquely optimal.\footnote{As \citet{eguiaxefteris} show, in a sufficiently large population (holding fixed value distributions), any function with a zero first derivative and positive second derivative will behave like QV; this idea likely extends to our present context. However, this result may be of limited relevance in our setting---the appropriate limit is often one where value distributions also change as the population grows, so that behavior of the function away from zero also matters.} We leave a formal proof for future work. Here, we plumb intuition by considering a class of rules that nests both QF and purely private contributions. 

Consider rules $\Phi^\beta$ that satisfy
\begin{equation}
  F^{p}=\Phi^{\beta}(\cdot) = \left(  \sum _{i}^{} \left( c_{i}^{p} \right) ^{\frac{1}{ \beta }} \right) ^{ \beta }. 
\end{equation}
Again, we analyze $\Phi^{\beta}$ abstracting from deficits and incentives created by mechanisms that take this form. To avoid redundancy, we skip straight to citizen  \( i \)'s first-order condition:
\begin{equation}
\frac{V_{i}^{p\prime} \left(\sum _{j}^{} \left( c_{j}^{p} \right) ^{\frac{1}{ \beta }} \right) ^{ \beta -1}}{ \left( c_{i}^{p} \right) ^{\frac{ \beta -1 }{ \beta }}}=1 \leftrightarrow V_{i}^{p\prime}=\frac{ \left( c_{i}^{p} \right) ^{\frac{ \beta -1}{ \beta }}}{ \left(  \sum _{j}^{} \left( c_{j}^{p} \right) ^{\frac{1}{ \beta }} \right) ^{ \beta -1}} \leftrightarrow V^{p\prime}=\frac{ \sum _{j}^{} \left( c_{j}^{p} \right) ^{\frac{ \beta -1}{ \beta }}}{ \left(  \sum _{j}^{} \left( c_{j}^{p} \right) ^{\frac{1}{ \beta }} \right) ^{ \beta -1}}. 
\end{equation}

A convenient property of this form is that for \( \beta >1\), unlike for the  \(  \beta =1 \) case, every citizen with strictly positive  \( V_{i}^{p\prime} \) will make a positive contribution.  Note that as  \(  \beta  \rightarrow 1 \) however, this rule approaches purely private contributions, while when  \(  \beta  \rightarrow 2 \) the rule becomes QF. 

Away from these now-familiar cases, it is useful to consider what happens for  \(\beta  \in  \left(1,2 \right)\) and  \(  \beta  \in  \left( 2,\infty \right)  \).\  Note that our reasoning above implies that  \( V^{p\prime} \) is in all these cases equated to something of the form
\begin{equation}
\frac{ \sum _{i}^{}h \left( x_{i} \right) }{h \left(  \sum _{i}^{}x_{i} \right) }, 
\end{equation}
where  \( x_{i} \equiv  \left( c_{j}^{p} \right) ^{\frac{1}{ \beta }} \) and  \( h \left( x \right)  \equiv x^{ \beta -1} \).
Whether this ratio is greater than or less than one is determined by Jensen's inequality.
That is, public goods will be over (under) funded if the function  \( x^{ \beta -1} \) is convex (concave). Given that  \(  \beta =2 \) leads to efficiency and  \(  \beta =1 \) leads to the severe underfunding of purely private contributions, this result should not be too surprising.\par

Might  \(  \beta  \in  \left( 1,2 \right)  \) be a superior interpolation between purely private contributions and QF when compared to our CQF mechanism in \ref{budgeted} above?  While such solutions are worthy of experimentation, theory indicates their inferiority.  To see why, note that  \(  \beta  \in  \left( 1,2 \right)  \) does not simply lead to underfunding, but to \textit{differential} underfunding of projects with many small contributors.  To see this note that we can rewrite citizen \(i\)'s first-order condition as
\begin{equation}
\left( V_{i}^{p\prime} \right) ^{\frac{1}{ \beta -1}}=\frac{ \left( c_{i}^{p} \right) ^{\frac{1}{ \beta }}}{ \sum _{j}^{} \left( c_{j}^{p} \right) ^{\frac{1}{ \beta }}} \leftrightarrow  \sum _{j}^{} \left( V_{j}^{p\prime} \right) ^{\frac{1}{ \beta -1}}=1. 
\end{equation}
Thus the efficiency condition that the aggregate marginal utility equals one obtains \textit{except} that the transformation  \( x^{\frac{1}{ \beta -1}} \)\ is applied to it.  For  \(  \beta <2 \) this transformation is convex, which will thus exaggerate large marginal utilities and dampen small ones. Thus, $\beta <2$ systematically leads to the underfunding of goods with many small beneficiaries and over-funding of goods with a few large beneficiaries. 

This result may be problematic for two reasons. First, it is problematic from an efficiency standpoint.  It is worse than the budget-constrained efficiency we (approximately) obtained in \ref{budgeted} above from CQF. In addition, it would seem to make small group collusion quite profitable. 

We do not mean to suggest that using a function other than quadratic has no purpose.  It may be useful, in some cases, to replace the square root and square functions with ones that behave more like the absolute value near the origin and only become quadratic further out to avoid large groups engaging in collusion. If a group of individuals colluded and each contributed very small amounts of money each, they could run a highly profitable scheme for very little cost. And CQF does relatively overfund goods with intense supporters, though the extent to which it does so is modulated by $\alpha$. Generally, we view these other functional forms primarily as a foil that helps us understand QF and the failures of private contributory schemes, rather than as viable alternative funding mechanisms.

\subsection{Failures of concavity and dynamic solutions}\label{concavity}

Above we assumed that all functions  \( V_{i}^{p}\) were smooth and concave.  Further analysis is required to fully understand the limitations of this assumption---indeed further analyses may point to alternative ways of structuring the mechanism. However, for now we illustrate that this assumption is innocuous in many circumstances. 

Consider, for example, the case in which the value derived from a public good is S-shaped (sinusoidal). Unless the good is funded ``sufficiently" citizens derive little value from it. Once it is funded sufficiently the marginal value of funding quickly diminishes.  This is a natural structure for projects with a nearly-fixed budget, such as public infrastructure projects.  In this case, citizens will not be willing to contribute unless they expect others to do so as well.

A natural solution to this problem is what is often called an ``assurance contract'' and was proposed by \citet{dybvig}. The most natural implementation is dynamic, and we suspect this is how an QF mechanism would operate in practice in any case, but static implementations are also possible.\footnote{For example, citizens could state a schedule of how much they would like to contribute, conditional on the contributions of others, or some coarse approximation thereof, such as a minimum threshold for their contribution.  An automated system could then calculate an equilibrium of these requests.} Essentially, there is a window of time during which contributions and withdrawals of contributions to the mechanism are made.  Citizens are thus able to contribute without fear that they will be left ``exposed'' to the risk that others will not contribute.  Given this, every citizen may as well make a reasonable contribution until the relevant threshold has been reached.  In the spirit of \citet{tabarrok}, an entrepreneur confident that a good is worth funding can further sweeten the deal by offering citizens a payment if they agree to temporarily fund the mechanism to avoid any potential coordination problems.\footnote{Note, however, that Tabarrok's suggestion that such a scheme alone is enough to fund public goods is problematic: it is based on a non-generic assumption of precisely infinite derivative in the utility functions at a single point. Any sinusoidal structure that is smooth will destroy the result and lead to arbitrary underfunding, whatever the ``assurance'' structure. The same logic applies to proposals that draw on discontinuous payoff functions for generic informational structures on values \citep{postelwaite}. }

This dynamic implementation is very likely to be desirable even if  \( V_{i}^{p} \) is concave, as the optimal contribution will still depend on others' contributions.  Thus, it makes little practical difference whether the value functions are concave, except possibly for the chance of a weak cold start problem, which an assurance contract scheme, \`{a} la Tabarrok, could address. Similar resolutions may apply to cases in which smoothness fails. 

\section{Applications}\label{apps}

We discuss several applications of QF in order to illustrate the importance of its many nice features in practice. We focus on an application to campaign finance reform. Then, we briefly discuss a few other potential applications to illustrate the range of settings, across quite distinct domains, in which we believe QF can be implemented in the relatively near term.

\subsection{Campaign finance}

In the US, the regulation of individual and collective contributions to political campaigns has been hotly debated since the first attempts to regulate campaign finance in the mid-1800s. The 1971 Federal Election Campaigns Act and subsequent amendments introduced extensive rules and procedures for campaign funding geared toward balancing transparency and equity with freedom of expression, and established the Federal Elections Commission (FEC) to regulate the fundraising activities of candidates for public office. Campaign finance issues frequently make their way to the Supreme Court---and the court's Citizens United decision has maintained a steady stream of vigorous opposition since its ruling in 2010.

The proposals for campaign finance reform are manifold. Suggesting modifications to municipal, state, and federal election law, these proposals range from simple tweaks of existing laws (e.g. capping contributions, stricter enforcement, restricting contributions from unions and corporations, etc.) to extensive re-envisioning of electoral systems (e.g. public financing schemes, anonymous capped contributions, etc.).\footnote{For influential discussions of campaign finance reform, see \citet{aa}, \citet{lessig} and \citet{hasen}.} The proposals for reform offer solutions to the core legal and political question: How can regulatory bodies strike a balance between freedom of expression through contributions to campaigns for elected office, while restricting the undue influence of special interests?   

The motivating problem for campaign finance reform can be analyzed using the formal apparatus presented in previous sections. When un-checked, permissive campaign finance laws such as the ones upheld in Citizens United are purely private contributory schemes. As demonstrated above, the private contributions mechanism for flexible funding of public goods leads to tyranny of the few who have resources to make very large contributions. In the campaign finance setting, the failure of a purely private contributory mechanism implies that on the margin, only a single contributor (the largest contributor) has any influence. The motivating problem of campaign finance is existing systems' vulnerability to tyranny of the rich, especially when one considers the possibility for \textit{quid pro quo} corruption. Just as the QF mechanism answers to the central problem of purely private funding, it provides a template for a new proposal for campaign finance reform.

The QF mechanism solves the funding problems with existing systems by boosting the contributions of small donors, thereby effectively diluting the influence of larger ones. Under existing schemes, individuals able to make only small contributions have little incentive to contribute, knowing that their contributions are just a drop in the bucket. Under QF-based campaign finance, all individuals have incentive to contribute as long as their evaluation of the candidate is positive. This fact also has good second-order outcomes that are the converse of \textit{quid pro quo} corruption under purely private contributions---since all individuals have incentive to contribute, campaigning politicians thus have to give some weight to every individual in their electorate. Under QF-based campaign finance, fundraising and outreach are intertwined, leading politicians to engage more thoroughly and deeply with their electorate.

The rationale for moving toward QF in campaign finance in part parallels the rationale behind existing political matching funds. Public matching funds for campaigns---such as the federal  matching fund for presidential elections\footnote{The federal matching fund for presidential campaigns is financed by a \$3 voluntary contribution on income tax returns.} and municipal and state matching funds for legislatures and mayoral races\footnote{As mentioned in the introduction, New York City's matching fund policy has led other cities and states to consider similar procedures. However, several states had publicly financed matching funds deemed unconstitutional by the 2011 Supreme Court Cases,  \textit{Arizona Free Enterprise Fund v. Bennett } and\textit{ McComish v. Bennett}.}---aim to amplify small contributions to campaigns for elected office. Thus, like the QF mechanism, matching funds subsidize small contributions. Yet existing matching funds systems are highly arbitrary: often they match  contributions linearly with some chosen scaling factor, and up to some selected level. How are the maximums and the match ratios chosen?  Shouldn't there be a more gradual taper of matching commitments? QF gives an optimal mechanism for achieving a flexible matching fund, with clear logic behind the match ratio for any given contribution.

The usual rationale behind matching systems, at least as described to the public, is some pretense of equity. Meanwhile, the rationale for an QF-based system does not even rely on an argument from equity. Indeed, QF is an (approximately) optimal mechanism from an efficiency perspective as we showed in \ref{budgeted} above. QF for campaign finance is thus joined to matching funds proposals in spirit, while offering substantial improvements over these existing proposals in practice. 

The efficiency rationale for QF and the fact that it does not tax speech has another important practical benefit: it would likely pass constitutional muster even under the post-Citizens United state of American constitutional law.  Large contributions are not taxed as we showed in \ref{budgeted}.  Thus, the system simply boosts the relative importance of small donors.

The design details of an implementation of QF in the context of local, state or federal campaigns require attention to many subtleties. Should there continue to be contribution limits until enough funds can be raised to make \( \alpha\) reasonably high?  What should a campaign have to do to be allowed to list in the system?  Should different candidates be allowed to form ``parties'' then disperse funds to their candidates?  Should contributions be public or doubly blind (as in Chile) so that candidates and parties do not know their own contributors, thus reducing corruption and collusion? Though many questions remain, overall, we believe the structure of QF/CQF could simplify the byzantine patchwork of current campaign finance regulations.

\subsection{Other applications}

To gesture to the range of other possible applications of QF/CQF, we briefly review four other promising domains: open source software, news media finance, charitable giving and urban public projects. 

\begin{itemize}

\item[\textbf{1.}] The open source software movement is based on the principle that code is or should be a public good.  Software is a classic example of an increasing returns activity, as it is nearly costless to copy and apply broadly, yet has potentially large upfront development costs, especially when the uncertainty of any solution working out is factored in.  Many in the software community view exclusionary capitalist solutions as wildly inefficient and undesirable, yet democratic or government-driven provision is usually far too hierarchical and centralized for fast moving technology appreciated primarily at first by a small community.\footnote{For a classic exposition of dynamics of open source development, see \citet{benkler}.} QF could provide a flexible solution that allows for public and private contributions to open source development, without compromising the principles upon which the movement was founded.\footnote{The organization Gitcoin used the CQF mechanism to dedicate \$25,000 to open source software projects in February 2019 (\url{https://medium.com/gitcoin/radical-results-gitcoins-25k-match-2c648bff7b19}).}

\item[\textbf{2.}] Financing the production of news is an especially fitting application of the QF mechanism. On the one hand, news (especially high-quality, investigative journalism) is perhaps the clearest example of a public good.  It can be costly to create, but it is essentially impossible to exclude anyone from consuming it beyond a very tiny window of time and thus it is very difficult to earn value without highly costly and wasteful mechanisms of exclusion.\footnote{ This problem has become increasingly acute with the rise of information and communications technology, leading to an increasing sense of crisis in the funding of news, which some have even labeled as ``existential'' \citep{foer}.} Yet news is also often relevant to a very broad community, making purely charitable funding difficult to pull off. This creates a strong desire for public funding and is the reason that governments all over the world are involved in news production.  However, the drawbacks of government involvement in news creation could hardly be more evident, given the central role of media in holding governments to account. QF offers a potentially appealing resolution.  Governments and philanthropists interested in supporting high-quality news without exerting or being seen to exert undue influence over content could use QF to effectively match donations to news creators in much the way that they already match contributions to organizations like National Public Radio in the United States.  Using CQF rather than standard matching would create greater efficiency and would require less targeted and discretionary applications of funds, thereby allowing a truly diverse ecosystem of news outlets to flourish.

\item[\textbf{3.}] QF also aligns well with existing movements toward more democratic forms of charitable giving. Many philanthropists provide matches to favorite charities and many are seeking more creative ways to harness decentralized information outside the philanthropist's whims to give away money. The Open Philanthropy and ``effective altruism'' movements are based on the idea that donor discretion should be removed from philanthropy to the greatest extent possible.  In areas where randomized controlled trials and other precise measurements are insufficient to direct funds, QF seems well-suited. Across a wide range of domains, from funding educational start-ups to large scale interventions in developing countries, CQF holds the potential to provide more accurate and less hierarchical signals for directing charitable funds.  Such non-hierarchical mechanisms for charitable giving are increasingly relevant as backlash continues to grow against the top-down dictates of well-intentioned but ultimately elitist class of donors \citep{easterly, giriharadas, reich}.\footnote{In fact, one  organization, WeTrust, has already run a charity donation matching campaign for 501(c)3 nonprofits using CQF (\url{https://www.ethnews.com/wetrust-experiments-with-liberal-radical-donation-matching}).}

\item[\textbf{4.}] While urbanists have long recognized the importance of community-level decision-making in cities, cities often lack mechanisms that allow goods valued in communities to emerge. QF, as applied to urban public funding decisions, could allow communities at all scales to fund projects that would struggle to get funding under centralized systems. A growing body of evidence suggests that policies emphasizing community values and diversity generate major improvements in city life. But city councils and other municipal governments struggle to meet the needs of sub-communities. Even though they are democratic systems intended to represent the will of a constituency, the needs of very small groups cannot be heard for reasons we have discussed. Some public goods are intensely important to a select few, for example a small group of households clustered in a few city blocks. And yet, the systems in place for communicating those needs and receiving the adequate funding are highly inefficient. QF, as applied to city planning, aligns nicely with the ideas advanced by some of the most prominent modern urban theorists.\footnote{Activist and intellectual Jane \citet{jacobs} famously condemned the urban planning ethos of her time, arguing that ``rationalist" urban planners do a poor job serving the needs of actual city-dwellers, undoing the sense of community that makes people move to cities in the first place through their top-down, deductive approach to allocation and decision-making. Similarly, anthropologist and geographer David Harvey has long recognized the importance of the city as a locus of self-definition through community attachment \citep{harvey}. Harvey emphasizes that precisely because the urbanization process creates so much surplus, the ``right to city" demands new forms of democratic management of that surplus.} The city is a fertile site for application of QV, and these applications align with the logic long-advanced by political economists and urban theorists alike.

\end{itemize}

\section{Conclusion}\label{conclusion}

In this paper, we presented a novel funding mechanism for an emergent ecosystem public goods, and highlighted key areas for further analysis, experimentation and application. Our treatment of the economic theory around this mechanism was superficial. Our analysis is based on simplified assumptions of quasi-linear utility and independent values. There is significant room for improvement in the analysis of the financing mechanism and deficits' influence on incentives. Our discussion of collusion is thin. Moreover, we do not even touch distributive issues. Experiments can help guide these theoretical analyses---it would be especially valuable to see experiments that compare QF to related mechanisms such as VCG.

Beyond economic theory, there are countless implementation questions our discussion leaves open. QF's formal structure will initially strike many as bizarre.  Designing interfaces, helping participants to ``see'' how QF works would require educating citizens and planners prior to implementation. Is it possible to make wide publics largely unfamiliar with mathematics comfortable with the QF mechanism? What can be done to further defend QF from attacks and hacks? Would QF optimally encourage community formation or would the norms and rules needed to avoid collusion inadvertently undermine important communities? Further, it is critical to understand how participants dynamically assess the marginal benefits they derive from different types of goods at different funding levels. Psychological constraints may suggest particular domains of application over others. 


We hope for a wide range of experimentation around QF---thus we have laid out a variety of narrower domains in which experimentation has already begun or otherwise may be most plausible in the near term.  Such experimentation is critical for a variety of reasons: to investigate the weaknesses of the formal mechanisms and address these flaws with new designs, to acquaint people with their operation and build awareness of their value, to build social institutions around them that make them effective, and to provide rigorous empirical evidence of their value.

\bibliographystyle{aer}

\bibliography{LRbib}

@article{falkinger,
  title={Efficient Private Provision of Public Goods by Rewarding Deviations from Average},
  author={Falkinger, Josef},
  journal={Journal of Public Economics},
  volume={62},
  number={3},
  pages={413--422},
  year={1996},
  publisher={Elsevier}
}

@article{roberts1992,
  title={Government subsidies to private spending on public goods},
  author={Roberts, Russell D},
  journal={Public Choice},
  volume={74},
  number={2},
  pages={133--152},
  year={1992},
  publisher={Springer}
}

@article{romer1986,
  title={Increasing Returns and Long-run Growth},
  author={Romer, Paul M},
  journal={Journal of Political Economy},
  volume={94},
  number={5},
  pages={1002--1037},
  year={1986},
  publisher={The University of Chicago Press}
}

@article{smith1980,
  title={Experiments with a Decentralized Mechanism for Public Good Decisions},
  author={Smith, Vernon L},
  journal={The American Economic Review},
  volume={70},
  number={4},
  pages={584--599},
  year={1980},
  publisher={JSTOR}
}

@article{greenlaffont1977,
  title={Characterization of Satisfactory Mechanisms for the Revelation of Preferences for Public Goods},
  author={Green, Jerry and Laffont, Jean-Jacques},
  journal={Econometrica},
  pages={427--438},
  year={1977},
  publisher={JSTOR}
}

@article{greenlaffont1979,
  title={On Coalition Incentive Compatibility},
  author={Green, Jerry and Laffont, Jean-Jacques},
  journal={The Review of Economic Studies},
  volume={46},
  number={2},
  pages={243--254},
  year={1979},
  publisher={JSTOR}
}

@book{jacobs,
	Author = {Jane Jacobs},
	Date-Added = {2018-09-03 09:32:54 -0400},
	Date-Modified = {2018-09-03 09:33:47 -0400},
	Publisher = {New York: Random House},
	Title = {The Death and Life of Great American Cities},
	Year = {1961}}

@book{easterly,
	Author = {William Easterly},
	Date-Added = {2018-09-03 09:29:34 -0400},
	Date-Modified = {2018-09-03 09:31:03 -0400},
	Publisher = {Oxford: Oxford University Press},
	Title = {The White Man's Burden: Why the West's Efforts to Aid the Rest Have Done So Much Ill and So Little Good},
	Year = {2007}}

@book{giriharadas,
	Author = {Anand Giriharadas},
	Date-Added = {2018-09-03 09:27:40 -0400},
	Date-Modified = {2018-09-03 09:29:20 -0400},
	Publisher = {New York: A. Knopf},
	Title = {Winners Take All: The Elite Charade of Changing the World},
	Year = {2018}}

@book{foer,
	Author = {Franklin Foer},
	Date-Added = {2018-09-03 09:26:11 -0400},
	Date-Modified = {2018-09-03 09:26:48 -0400},
	Publisher = {New York: Penguin},
	Title = {World Without Mind: The Existential Threat of Big Tech},
	Year = {2017}}

@book{benkler,
	Author = {Yochai Benkler},
	Date-Added = {2018-09-03 09:23:45 -0400},
	Date-Modified = {2018-09-03 09:24:22 -0400},
	Publisher = {New Haven, CT: Yale University Press},
	Title = {The Wealth of Networks: How Social Production Transforms Markets and Freedom},
	Year = {2007}}

@book{lessig,
	Author = {Lawrence Lessig},
	Date-Added = {2018-09-02 17:35:14 -0400},
	Date-Modified = {2018-09-02 17:36:01 -0400},
	Publisher = {New York: Twelve},
	Title = {Republic, Lost: How Money Corrupts Congress -- and a Plan to Stop It},
	Year = {2011}}

@article{tabarrok,
	Author = {Alexander Tabarrok},
	Date-Added = {2018-09-02 17:28:06 -0400},
	Date-Modified = {2018-09-02 17:29:08 -0400},
	Journal = {Public Choice},
	Number = {1--2},
	Pages = {345--362},
	Title = {The Private Provision of Public Goods via Dominant Assurance Contracts},
	Volume = {96},
	Year = {1998}}

@article{dybvig,
	Author = {Phillip H. Dybvig and Chester S. Spatt},
	Date-Added = {2018-09-02 17:27:31 -0400},
	Date-Modified = {2018-09-02 17:27:31 -0400},
	Journal = {Journal of Public Economics},
	Number = {2},
	Pages = {231--247},
	Title = {Adoption Externalities as Public Goods},
	Volume = {20},
	Year = {1983}}

@unpublished{eguiaxefteris,
	Author = {Eguia, Jon X. and Xefteris, Dimitrios},
	Date-Added = {2018-09-02 17:25:50 -0400},
	Date-Modified = {2018-09-02 17:25:50 -0400},
	Note = {https://ssrn.com/abstract=3138614},
	Title = { Implementation by Vote-Buying Mechanisms},
	Year = {2018}}

@unpublished{hitzig,
	Author = {Zo\"{e} Hitzig},
	Date-Added = {2018-09-02 17:25:50 -0400},
	Date-Modified = {2018-09-02 17:25:50 -0400},
	Note = {https://ssrn.com/abstract=3242882},
	Title = {Bridging the `Normative Gap': Mechanism Design and Social Justice},
	Year = {2018}}

@article{atkinsonstiglitz,
	Author = {A. B. Atkinson and J. E. Stiglitz},
	Date-Added = {2018-09-02 11:57:45 -0400},
	Date-Modified = {2018-09-02 11:57:45 -0400},
	Journal = {Journal of Public Economics},
	Number = {1--2},
	Pages = {55--75},
	Title = {The Design of Tax Structure: Direct Versus Indirect Taxation},
	Volume = {6},
	Year = {1976}}

@article{ramsey,
	Author = {F. P. Ramsey},
	Date-Added = {2018-09-02 11:57:35 -0400},
	Date-Modified = {2018-09-02 11:57:35 -0400},
	Journal = {Economic Journal},
	Number = {145},
	Owner = {glenweyl},
	Pages = {47--61},
	Timestamp = {2015.06.19},
	Title = {A Contribution to the Theory of Taxation},
	Volume = {37},
	Year = {1927}}

@article{alesina,
	Author = {Alberto Alesina and Enrico Spolaore},
	Date-Added = {2018-09-02 11:12:19 -0400},
	Date-Modified = {2018-09-02 11:13:03 -0400},
	Journal = {Quarterly Journal of Economics},
	Number = {4},
	Pages = {1027--1056},
	Title = {On the Number and Size of Nations},
	Volume = {107},
	Year = {1997}}

@book{aa,
	Author = {Bruce Ackerman and Ian Ayres},
	Date-Added = {2018-09-02 10:47:03 -0400},
	Date-Modified = {2018-09-02 10:47:54 -0400},
	Publisher = {New Haven, CT: Yale University Press},
	Title = {Voting with Dollars: A New Paradigm for Campaign Finance},
	Year = {2002}}

@article{intro,
	Author = {Eric A. Posner and E. Glen Weyl},
	Date-Added = {2016-12-08 04:17:30 +0000},
	Date-Modified = {2018-09-02 10:12:20 -0400},
	Journal = {Public Choice Special Issue: Quadratic Voting and the Public Good},
	Number = {1--2},
	Pages = {1--22},
	Title = {Introduction},
	Volume = {172},
	Year = {2017}}

@article{bergstrombowen1,
	Author = {Ted Bergstrom},
	Date-Added = {2009-10-16 13:52:31 -0400},
	Date-Modified = {2010-03-07 23:00:58 -0500},
	Journal = {Scandinavian Journal of Ecnomics},
	Number = {2},
	Pages = {216--226},
	Title = {When Does Majority Rule Supply Public Goods Efficiently?},
	Volume = {81},
	Year = {1979}}

@article{bergstrombowen2,
	Author = {Ted Bergstrom},
	Date-Added = {2009-10-16 14:56:03 -0400},
	Date-Modified = {2009-10-16 14:56:50 -0400},
	Journal = {National Tax Journal},
	Number = {2},
	Pages = {81--86},
	Title = {Do Governments Spend Too Much},
	Volume = {32},
	Year = {1979}}

@article{varian,
	Author = {Theodore Bergstrom and Lawrence Blume and Hal Varian},
	Date-Added = {2014-06-05 10:59:05 +0000},
	Date-Modified = {2014-06-05 11:00:26 +0000},
	Journal = {Journal of Public Economics},
	Number = {1},
	Pages = {25--49},
	Title = {On the Private Provision of Public Goods},
	Volume = {29},
	Year = {1986}}

@book{hasen,
  title={Plutocrats United: Campaign Money, the Supreme court, and the Distortion of American Elections},
  author={Hasen, Richard L},
  year={2016},
  publisher={Yale University Press}
}

@book{kant,
  title={Grounding for the Metaphysics of Morals, tr. James Ellington},
  author={Kant, Immanuel},
  year={1785},
  publisher={Hackett Publishing [1993]}
}

@book{harvey,
  title={Social Justice and the City (Revised Edition)},
  author={Harvey, David},
  year={2009},
  publisher={University of Georgia Press}
}

@misc{ostrom,
  title={Governing the Commons: The Evolution of Institutions for Collective Action},
  author={Ostrom, Elinor},
  year={1990},
  publisher={Cambridge University Press}
}

@book{reich,
  title={Just Giving: Why Philanthropy is Failing Democracy and How It Can Do Better},
  author={Reich, Rob},
  year={2018},
  publisher={Princeton University Press}
}

@article{bowen,
	Author = {Howard R. Bowen},
	Date-Added = {2009-10-16 13:53:31 -0400},
	Date-Modified = {2009-10-16 15:13:44 -0400},
	Journal = {Quarterly Journal of Economics},
	Number = {1},
	Pages = {27--48},
	Title = {The Interpretation of Voting in the Allocation of Economic Resources},
	Volume = {58},
	Year = {1943}}

@article{clarke,
	Author = {Edward H. Clarke},
	Date-Added = {2009-08-31 22:02:46 -0400},
	Date-Modified = {2009-08-31 22:03:45 -0400},
	Journal = {Public Choice},
	Number = {1},
	Pages = {17--33},
	Title = {Multipart Pricing of Public Goods},
	Volume = {11},
	Year = {1971}}

@article{groves,
	Author = {Theodore Groves},
	Date-Added = {2009-08-31 22:03:47 -0400},
	Date-Modified = {2009-08-31 22:04:43 -0400},
	Journal = {Econometrica},
	Number = {4},
	Pages = {617-631},
	Title = {Incentives in Teams},
	Volume = {41},
	Year = {1973}}

@article{gl,
	Author = {Theodore Groves and John Ledyard},
	Date-Added = {2010-03-03 23:40:22 -0500},
	Date-Modified = {2010-03-04 00:32:59 -0500},
	Journal = {Econometrica},
	Number = {4},
	Pages = {783--809},
	Title = {Optimal Allocation of Public Goods: A Solution to the ``Free Rider'' Problem},
	Volume = {45},
	Year = {1977}}

@article{hz,
	Author = {Aanund Hylland and Richard Zeckhauser},
	Date-Added = {2015-06-28 16:52:25 +0000},
	Date-Modified = {2015-07-06 01:29:45 +0000},
	Journal = {Journal of Political Economy},
	Number = {2},
	Owner = {glenweyl},
	Pages = {293--314},
	Timestamp = {2015.07.23},
	Title = {The Efficient Allocation of Individuals to Positions},
	Volume = {87},
	Year = {1979}}

@unpublished{qveq,
	Author = {Steven P. Lalley and E. Glen Weyl},
	Date-Added = {2015-02-20 03:59:57 +0000},
	Date-Modified = {2018-09-02 10:35:27 -0400},
	Note = {https://arxiv.org/abs/1409.0264},
	Title = {Nash Equilibria for Quadratic Voting},
	Year = {2018}}

@article{madison,
	Author = {James Madison},
	Date-Added = {2014-06-07 16:23:11 +0000},
	Date-Modified = {2014-06-07 16:24:15 +0000},
	Journal = {Daily Advertiser},
	Title = {The Federalist No. 10: The Utility of the Union as a Safeguard Against Domestic Faction and Insurrection},
	Volume = {Thursday, November 22},
	Year = {1787}}

@article{postelwaite,
	Author = {George J. Mailath and Andrew Postlewaite},
	Date-Added = {2009-08-04 15:10:53 -0500},
	Date-Modified = {2015-12-07 01:47:59 +0000},
	Journal = {Review of Economic Studies},
	Number = {3},
	Pages = {351--367},
	Title = {Asymmetric Information Bargaining Problems with Many Agents},
	Volume = {57},
	Year = {1990}}

@book{olson,
	Author = {Mancur Olson},
	Date-Added = {2010-03-04 13:42:37 -0500},
	Date-Modified = {2010-03-04 13:43:47 -0500},
	Publisher = {Cambridge, MA: Harvard University Press},
	Title = {The Logic of Collective Action},
	Year = {1965}}

@article{impracticalVCG,
	Author = {Michael H. Rothkopf},
	Date-Added = {2013-07-19 15:46:59 +0000},
	Date-Modified = {2013-07-19 15:47:50 +0000},
	Journal = {Operations Research},
	Number = {2},
	Pages = {191--197},
	Title = {Thirteen Reasons Why the Vickrey-Clarke-Groves Process is Not Practical},
	Volume = {55},
	Year = {2007}}

@article{samuelsonpure,
	Author = {Paul A. Samuelson},
	Date-Added = {2010-03-03 18:06:05 -0500},
	Date-Modified = {2010-03-03 18:07:15 -0500},
	Journal = {Review of Economics and Statistics},
	Number = {4},
	Pages = {387--389},
	Title = {The Pure Theory of Public Expenditure},
	Volume = {36},
	Year = {1954}}

@article{vickery,
	Author = {William Vickrey},
	Date-Added = {2009-08-31 22:17:22 -0400},
	Date-Modified = {2009-08-31 22:18:39 -0400},
	Journal = {Journal of Finance},
	Number = {1},
	Pages = {8--37},
	Title = {Counterspeculation, Auctions and Competitive Sealed Tenders},
	Volume = {16},
	Year = {1961}}

@unpublished{qvb,
	Author = {E. Glen Weyl},
	Date-Added = {2011-05-26 13:02:07 +0300},
	Date-Modified = {2015-08-08 02:06:53 +0000},
	Note = {http://goo.gl/8YEO73},
	Title = {Quadratic Vote Buying},
	Year = {2012}}

@Article{roemer,
  author  = {John E. Roemer},
  title   = {Kantian Equilibrium},
  journal = {Scandinavian Journal of Economics},
  year    = {2010},
  volume  = {112},
  number  = {1},
}

@Article{krugman,
  author  = {Paul Krugman},
  title   = {Increasing Returns and Economic Geography},
  journal = {Journal of Political Economy},
  year    = {1991},
  volume  = {99},
  number  = {3},
  pages   = {483--499},
}
\end{document}